\newcommand{\shortversion}[1]{}
\newcommand{\longversion}[1]{#1}
  \date{}
\newenvironment{myquote}{\list{}{\leftmargin=\parindent\rightmargin=0in\topsep=3pt}\item[]}{\endlist}
\newenvironment{myquote}{\begin{center}
    \begin{minipage}{.80\linewidth}}{\end{minipage}\end{center}}
\newcommand{\SB}{\{\,}%
\newcommand{\SM}{\;{:}\;}%
\newcommand{\SE}{\,\}}%
\newcommand{\SBs}{\{}%
\newcommand{\SEs}{\}}%
\newcommand{\NP}{\text{\normalfont NP}}
\newcommand{\coNP}{\text{\normalfont co-NP}}
\newcommand{\FPT}{\text{\normalfont FPT}}
\newcommand{\W}[1]{\text{\normalfont W[#1]}}
\newcommand{\paraNP}{\text{\normalfont para-NP}}
\newcommand{\CCC}{\mathcal{C}}
\newcommand{\Card}[1]{|#1|}
\newcommand{\rCard}[1]{||#1||}
\newcommand{\mtext}[1]{\text{\normalfont #1}}
\newcommand{\ol}[1]{\overline{#1}}
\newcommand{\kleene}{\ensuremath{^*}}
\newcommand{\CNF}{\mtext{\sc CNF}}
\newcommand{\SAT}{\mtext{\sc SAT}}
\newcommand{\Horn}{\mtext{\sc Horn}}
\newcommand{\DefHorn}{\mtext{\sc Def}\-\mtext{\sc Horn}}
\newcommand{\NUHorn}{\mtext{\sc NU}\-\mtext{\sc Horn}}
\newcommand{\Krom}{\mtext{\sc Krom}}
\newcommand{\VO}{\mtext{\sc VO}}
\newcommand{\Var}[1]{\mtext{Var(\ensuremath{#1})}}
\newcommand{\Lit}[1]{\mtext{Lit(\ensuremath{#1})}}
\newcommand{\impl}[1]{\mtext{impl}(\ensuremath{#1})}
\newcommand{\kbackbone}{\mtext{{\sc Local}-}\allowbreak{}\mtext{{\sc Back}}\-\mtext{{\sc bone}}}
\newcommand{\iterativebackbone}{\mtext{{\sc Iterative}-}\allowbreak\mtext{{\sc Local}-}\allowbreak{}\mtext{{\sc Backbone}}}
\newcommand{\MCC}{\mtext{{\sc Multi}}\-\mtext{{\sc colored}-}\allowbreak{}\mtext{{\sc Clique}}}
\newcommand{\smallunsatsubset}{\mtext{{\sc Small}-}\allowbreak{}\mtext{{\sc Unsatisfi}}\-\mtext{{\sc able}-}\allowbreak{}\mtext{{\sc Subset}}}
\newcommand{\shorthyperpath}{\mtext{{\sc Short}-}\allowbreak{}\mtext{{\sc Hyperpath}}}
\newcommand{\fptreduction}{\leq_{\mtext{fpt}}}
\def\hy{\hbox{-}\nobreak\hskip0pt}
  \renewenvironment{proof}{\vspace{-2mm}\begin{pf}}{\qed\end{pf}}
  \newtheorem{observation}[proposition]{\textbf{Observation}}
  \newtheorem{theorem}{Theorem}
  \newtheorem{lemma}{Lemma}
  \newtheorem{proposition}{Proposition}
  \newtheorem{corollary}{Corollary}
  \newtheorem{observation}{Observation}  
  \theoremstyle{remark}
  \newtheorem{example}{Example}
\tikzstyle{vertex}=[draw=none]
\tikzstyle{edge} = [draw,->]
\DeclareRobustCommand{\DE}[3]{#2}
\begin{document}

\shortversion{
\title{Local Backbones}
\author{
Ronald de Haan\inst{1}\thanks{Supported by the European Research Council (ERC), project COMPLEX REASON, 239962.} \and
Iyad Kanj\inst{2} \and
Stefan Szeider\inst{1}$^{\star}$
}
\institute{
Institute of Information Systems, Vienna University of Technology,
Vienna, Austria \and
School of Computing, DePaul University, Chicago, IL
}
}

\longversion{
\title{Local Backbones}
\author{Ronald de Haan$^{1}$\thanks{Supported by the European Research Council (ERC), project COMPLEX REASON, 239962.} \and
Iyad Kanj$^2$ \and
Stefan Szeider$^{1\hspace{1pt}*}$}
\date{ \small ${}^1$ Institute of Information Systems, Vienna University of Technology,
Vienna, Austria\\
\small ${}^2$ School of Computing, DePaul University, Chicago, IL
}
}

  \maketitle

\longversion{\thispagestyle{empty}}
\shortversion{\pagestyle{empty}}
 
  \begin{abstract}
    A backbone of a propositional CNF formula is a variable whose
    truth value is the same in every truth assignment that satisfies the formula.
    The notion of backbones for CNF formulas has been studied in various contexts. 
    In this paper, we introduce local variants of backbones, and study the
    computational complexity of detecting them. In
    particular, we consider $k$\hy backbones, which are backbones for
    sub-formulas consisting of at most $k$ clauses, and iterative $k$\hy backbones,
    which are backbones
    that result after repeated instantiations of $k$\hy backbones.
    We determine the parameterized complexity of
    deciding whether a variable is a $k$\hy backbone or an
    iterative~$k$\hy backbone for various restricted
    formula classes, including Horn, definite Horn, and Krom.  We also
    present some first empirical results regarding backbones for
    CNF-Satisfiability (SAT). The empirical results we obtain show that a large
    fraction of the backbones of structured SAT instances are local, in
    contrast to random instances, which appear to have few local backbones.
  \end{abstract}

\section{Introduction}

A \emph{backbone} of a propositional
formula $\varphi$ is a variable whose truth value is the same for all
satisfying assignments of $\varphi$.
The term originates in computational
physics~\cite{SchneiderEtal96}, and the notion of backbones has been
studied for \SAT{} in various contexts.
Backbones have also been
considered in other contexts (e.g., knowledge compilation~\cite{DarwicheMarquis02})
and for other combinatorial problems~\cite{SlaneyWalsh01}.  If
a backbone and its truth value are known, then we can simplify the
formula without changing its satisfiability, or the number of
satisfying assignments. Therefore, it is desirable to have an
efficient algorithm for detecting backbones. In general, however, the
problem of identifying backbones is coNP-complete
(this follows from the fact that a literal $l$ is enforced by a formula
$\varphi$ if and only if $\varphi \wedge \neg l$ is unsatisfiable).

A variable can be a backbone because of \emph{local properties} of the
formula (such backbones we call \emph{local backbones}).
As an extreme example consider a CNF formula that contains a
unit clause. In this case we know that the variable appearing in the unit
clause is a backbone of the formula. More generally, we define the
\emph{order} of a backbone $x$ of a CNF formula $\varphi$ to be the
cardinality of a smallest subset $\varphi'\subseteq \varphi$ such that $x$
is a backbone of $\varphi'$, and we refer to backbones of order $\leq k$
as \emph{$k$\hy backbones}.  Thus, unit clauses give rise to
1-backbones.

A natural generalization of $k$\hy backbones are variables whose truth
values are enforced by repeatedly assigning $k$\hy backbones to their
appropriate truth value and simplifying the formula according to
this assignment. 
We call variables that are assigned by this
iterative process \emph{iterative $k$\hy backbones}
(for a formal definition, see Section~\ref{sec:logic}).
For instance, iterative $1$\hy backbones are exactly those variables whose
truth values are enforced by unit propagation. The \emph{iterative
  order} of a backbone $x$ is the smallest $k$ such that $x$ is an
iterative $k$\hy backbone.

\paragraph{Finding Local Backbones}
For every constant $k$, we can clearly identify all $k$\hy backbones
and iterative $k$\hy backbones of a CNF formula $\varphi$ in polynomial
time by simply going over all subsets of $\varphi$ of size at most $k$
(and iterating this process if necessary). However, if~$\varphi$ consists
of $m$ clauses, then this brute-force search requires us to consider
at least $m^k$ subsets, which is impractical already for small values
of~$k$. It would be desirable to have an algorithm that detects
(iterative) $k$\hy backbones in time $f(k)\rCard{\varphi}^c$ where $f$ is a function,
$\rCard{\varphi}$ denotes the length of the formula, and $c$ is a constant. An
algorithm with such a running time would render the problem
\emph{fixed-parameter tractable} with respect to
parameter~$k$~\cite{DowneyFellows99}. In this paper we study the
question of whether the identification of (iterative) $k$\hy backbones
of a CNF formula is fixed-parameter tractable or not, considering
various restrictions on the CNF formula. We therefore define the
following template for parameterized problems, where $\CCC$ is an
arbitrary class of CNF formulas.

\begin{myquote}
  $\kbackbone[\CCC]$
  
  \emph{Instance:} a CNF formula $\varphi\in \CCC$, a variable $x$ of
  $\varphi$, and an integer $k\geq 1$.

  \emph{Parameter:} The integer $k$.

  \emph{Question:} Is $x$ a $k$\hy backbone of $\varphi$?
\end{myquote}
The problem $\iterativebackbone$ is defined similarly. 
It is not hard to see that $\kbackbone[\CCC]$ is closely related to
the problem of finding a small unsatisfiable subset of a CNF
formula (this is proven below in Lemmas~\ref{lem:1}~and~\ref{lem:2}).
More precisely, for every class $\CCC$, the problem
$\kbackbone[\CCC]$ has the same parameterized complexity as the
following problem, studied by Fellows et
al.~\cite{FellowsSzeiderWrightson06}. 
\begin{myquote}
  $\smallunsatsubset[\CCC]$
  
  \emph{Instance:} a CNF formula $\varphi\in \CCC$, and an integer $k\geq 1$.

  \emph{Parameter:} The integer $k$.

  \emph{Question:} Is there an unsatisfiable subset $\varphi'\subseteq
  \varphi$ consisting of at most $k$ clauses?
\end{myquote}
This problem is of relevance also for classes $\CCC$ for which the
satisfiability is decidable in polynomial time. For instance, given an
inconsistent knowledge base in terms of an unsatisfiable set of Horn
clauses, one might want to detect the cause for the inconsistency in
terms of a small unsatisfiable subset.

\paragraph{Results}

We draw a detailed parameterized complexity map of the
considered problems $\kbackbone[\CCC]$, $\iterativebackbone[\CCC]$,
and $\smallunsatsubset[\CCC]$, for various classes $\CCC$.
Table~\ref{table:results} provides an overview of our complexity
results ($\FPT$ indicates that the problem is fixed-parameter
tractable, $\W{1}$-hardness indicates strong evidence that the problem
is not fixed-parameter tractable; see Section~\ref{sec:prelim-pc} for
details).
\begin{table}[tb]
\newcommand{\wh}{$\W{1}$-h}
\newcommand{\wc}{$\W{1}$-c}
  \centering

  \begin{tabular}{@{}l@{\qquad}l@{\qquad}l@{}} \toprule
    $\CCC $ & $\kbackbone[\CCC]$ &  $\iterativebackbone[\CCC]$ \\ \midrule
    $\CNF$                 & \wc \hfill(Thm~\ref{thm:bb-defhorn}) &  \wh \hfill(Cor~\ref{cor:ibb-horn}) \\
    $\DefHorn$  & \wc \hfill(Thm~\ref{thm:bb-defhorn}) &  P \hfill(Thm~\ref{thm:ibb-defhorn}) \\
    $\NUHorn$  & \wc \hfill(Thm~\ref{thm:bb-horn-nuhorn}) &  \wh \hfill(Cor~\ref{cor:ibb-horn}) \\
    $\Krom$                & P \hfill(Prop~\ref{prop:bb-krom}) &  P \hfill(Thm~\ref{thm:ibb-krom}) \\ 
    $\VO_d$                & FPT \hfill(Thm~\ref{thm:bb-vo}) &  FPT \hfill(Thm~\ref{thm:ibb-vo}) \\  \bottomrule
  \end{tabular}

  \vspace{5mm}
  \caption{Map of parameterized complexity results.
  (The classes $\CCC$ of formulas are defined in Section~\ref{sec:logic}.)}
  \shortversion{\vspace{-30pt}}
  \label{table:results}
\end{table}
\longversion{

}
It is interesting to observe that the non-iterative problems tend to
be at least as hard as the iterative problems.
%
The polynomial time solvability of finding iterative local backbones
in definite Horn formulas is also interesting,
especially in the light of the intractability of the corresponding problem
of finding (non-iterative) local backbones.

We also provide some first empirical results on the distribution of
local backbones in some benchmark SAT instances. We consider
structured instances and random instances. For the structured instances
that we consider we observe that a large
fraction of the backbones are of relatively small iterative order. In
contrast, the backbones of the random instances that we consider are of
large iterative order. The results suggest that the distribution of
the iterative order of backbones might be an indicator for a hidden
structure in SAT instances.

\paragraph{Related Work}
The notion of backbones has initially been studied in
the context of optimization problems in computational physics~\cite{SchneiderEtal96}.
The notion has later been applied to several
combinatorial problems~\cite{SlaneyWalsh01}, including \SAT{}.
The relation between backbones
and the difficulty of finding a solution for \SAT{}
has been studied by Kilby et al.~\cite{KilbySlaneyThiebauxWalsh05},
by Parkes~\cite{Parkes97} and by Slaney and Walsh~\cite{SlaneyWalsh01}.
The complexity of finding backbones
has been studied theoretically by Kilby et al.~\cite{KilbySlaneyThiebauxWalsh05}.
The notion of backbones has also been used
for improving SAT solving algorithms by
Dubois and Dequen~\cite{DuboisDequen01}
and by Hertli et al.~\cite{HertliMoserScheder11}.
The problem of identifying unsatisfiable subsets
of size at most $k$ has been considered by
Fellows et al.~\cite{FellowsSzeiderWrightson06},
who proved that this problem \longversion{(parameterized on $k$)}
is \W{1}-complete.
Furthermore, they showed by the same reduction
that finding a $k$-step resolution refutation for a given formula
is \W{1}-complete as well.
Related notions of locally enforced literals have also been studied,
including a notion of generalized unit-refutation
\cite{GwynneKullmann13,Kullmann99}.

\shortversion{
\paragraph{Full Version}
Because of space constraints some proofs have been
omitted or shortened. Detailed proofs can be found in the full
version, available at  \url{arxiv.org/abs/1304.5479}.
}

\section{Preliminaries}

\subsection{CNF Formulas, Unsatisfiable Subsets and Local Backbones}
\label{sec:logic}

A \emph{literal} is a propositional variable $x$ or a negated variable $\neg x$.
The \emph{complement} $\overline{x}$ of a positive literal $x$ is $\neg x$,
and the complement $\overline{\neg x}$ of a negative literal $\neg x$ is $x$.
A \emph{clause} is a finite set of literals, not containing a complementary pair $x$, $\neg x$.
A \emph{unit clause} is a clause of size 1.
We let $\bot$ denote the empty clause.
A \emph{formula} in conjunctive normal form (or \CNF{} formula)
is a finite set of clauses.
We define the \emph{length} $\rCard{\varphi}$ of a formula~$\varphi$ to be $\sum_{c \in \varphi} \Card{c}$;
the number of clauses of $\varphi$ is denoted by $\Card{\varphi}$.
A formula $\varphi$ is a $k$\hy \CNF{} formula if the size of each of its clauses is at most $k$.
A 2-\CNF{} formula is also called a Krom formula.
A clause is a \emph{Horn clause} if it contains at most one positive literal.
A Horn clause containing exactly one positive literal is a \emph{definite Horn clause}.
Formulas containing only Horn clauses are called \emph{Horn formulas}.
\emph{Definite Horn formulas} are defined analogously.
We denote the class of all Krom formulas by \Krom{},
the class of all Horn formulas by \Horn{}
and the class of all definite Horn formulas by \DefHorn{}.
We let \NUHorn{} denote the class of Horn formulas
not containing unit clauses
(such formulas are always satisfiable).
Let $d$ be an integer. The class of \CNF{} formulas such
that each variable occurs at most $d$ times
is denoted by $\VO_d$.

For a \CNF{}-formula $\varphi$, the set $\Var{\varphi}$ denotes
the set of all variables $x$ such that some clause of $\varphi$ contains $x$ or $\neg x$;
the set $\Lit{\varphi}$ denotes the set of all literals $l$
such that some clause of $\varphi$ contains $l$ or $\overline{l}$.
A formula $\varphi$ is \emph{satisfiable} if there exists an assignment $\tau : \Var{\varphi} \rightarrow \SBs 0,1 \SEs$
such that every clause $c \in \varphi$ contains some variable~$x$ with $\tau(x)=1$
or some negated variable $\neg x$ with $\tau(x) = 0$
(we say that such an assigment $\tau$ satisfies~$\varphi$);
otherwise, $\varphi$ is \emph{unsatisfiable}.
$\varphi$ is \emph{minimally unsatisfiable} if~$\varphi$ is unsatisfiable
and every proper subset of $\varphi$ is satisfiable.
It is well-known that any minimal unsatisfiable CNF formula
has more clauses than variables
(this is known as Tarsi's Lemma \cite{AharoniLinial86,Kullmann99d}).
For two formulas $\varphi,\psi$,
whenever all assignments satisfying $\varphi$
also satisfy $\psi$,
we write $\varphi \models \psi$.
The reduct $\varphi|_{L}$ of a formula $\varphi$
with respect to a set of literals $L \subseteq \Lit{\varphi}$
is the set of clauses of $\varphi$ that do not contain any $l \in L$
with all occurrences of $\overline{l}$ for all $l \in L$ removed.
For singletons $L = \SBs l \SEs$, we also write $\varphi|_{l}$.
We say that a class $\CCC$ of formulas is \emph{closed under variable instantiation}
if for every $\varphi \in \CCC$ and every $l \in \Lit{\varphi}$
we have that $\varphi|_{l} \in \CCC$.
For an integer $k$, a variable $x$
is a \emph{$k$\hy backbone} of~$\varphi$,
if there exists a $\varphi' \subseteq \varphi$ such that $\Card{\varphi'} \leq k$
and either $\varphi' \models x$ or $\varphi' \models \neg x$.
A variable $x$ is a \emph{backbone} of a formula $\varphi$ if it is a
$\Card{\varphi}$-backbone.
Note that the definition of the backbone of a formula $\varphi$
that is used in some of the literature
includes all literals $l \in \Lit{\varphi}$
such that $\varphi \models l$.
For an integer $k$, a variable $x$
is an \emph{iterative $k$\hy backbone} of $\varphi$
if either (i) $x$ is a $k$\hy backbone of $\varphi$,
or (ii) there exists $y \in \Var{\varphi}$ such that
$y$ is a $k$\hy backbone of $\varphi$,
and for some $l \in \SBs y,\neg y \SEs$,
$\varphi \models l$
and $x$ is an iterative $k$\hy backbone of~$\varphi|_{l}$.

For a Krom formula $\varphi$, we let $\impl{\varphi}$ be the \emph{implication graph} $(V,E)$ of $\varphi$,
where $V = \SB x,\neg x \SM x \in \Var{\varphi} \SE$
and $E = \SB (\ol{a},b), (\ol{b},a) \SM \SBs a,b \SEs \in \varphi \SE$.
We say that a path $p$ in this graph \emph{uses a clause} $\SBs a,b \SEs$ of $\varphi$
if either one of the edges $(\ol{a},b)$ and $(\ol{b},a)$ occurs in $p$;
we say that $p$ \emph{doubly uses} this clause if both edges occur in $p$.

\subsection{Parameterized Complexity}
\label{sec:prelim-pc}

Here we introduce the relevant concepts of parameterized complexity theory.
For more details, we refer to text books on the topic~\cite{DowneyFellows99,FlumGrohe06,Niedermeier06}.
An instance of a parameterized problem is a pair $(I,k)$
where $I$ is the main part of the instance,
and $k$ is the parameter.
A parameterized problem is \emph{fixed-parameter tractable}
if instances $(I,k)$ can be solved by a deterministic algorithm
that runs in time $f(k)\Card{I}^c$,
where $f$ is a computable function of $k$,
and $c$ is a constant
(algorithms running within such time bounds are called \emph{fpt-algorithms}).
If $c = 1$, we say the problem is \emph{fixed-parameter linear}.
\FPT{} denotes the class of all fixed-parameter tractable
problems.
Using fixed-parameter tractability, many problems that are classified as
intractable in the classical setting can be shown to be tractable
for small values of the parameter.

Parameterized complexity also offers a \emph{completeness theory},
similar to the theory of \NP{}-completeness.
This allows the accumulation of strong theoretical evidence that
a parameterized problem is not fixed-parameter tractable.
Hardness for parameterized complexity classes is based on fpt-reductions,
which are many-one reductions where the parameter of one problem
maps into the parameter for the other.
More specifically, a parameterized problem $L$ is fpt-reducible to another
parameterized problem $L'$ (denoted $L \fptreduction L'$) if there is a mapping $R$
from instances of $L$ to instances of $L'$ such that
(i) $(I,k) \in L$ if and only if $(I',k') = R(I,k) \in L'$,
(ii) $k' \leq g(k)$ for a computable function $g$, and
(iii) $R$ can be computed in time $O(f(k)\Card{I}^c)$
for a computable function~$f$ and a constant $c$.

Central to the completeness theory is the hierarchy
$\FPT{} \subseteq \W{1} \subseteq \W{2} \subseteq \dotsm \subseteq \paraNP{}$.
Each intractability class \W{t} contains all parameterized problems that can be reduced
to a certain parameterized satisfiability problem
under fpt-reductions.
The intractability class \paraNP{} includes all parameterized problems
that can be solved by a nondeterministic fpt-algorithm.
Fixed-parameter tractability of any problem hard for any of these intractability classes
would imply that the Exponential Time Hypothesis fails~\cite{FlumGrohe06,ImpagliazzoPaturiZane01}
(i.e., the existence of a $2^{o(n)}$ algorithm for $n$-variable $3\SAT{}$).

\section{Local Backbones and Small Unsatisfiable Subsets}

The straightforward reductions in the proofs of the following two lemmas,
illustrate the close connection
between \kbackbone{} and \smallunsatsubset{}.

\begin{lemma}
\label{lem:1}
$\smallunsatsubset{} \fptreduction \kbackbone{}$.
\end{lemma}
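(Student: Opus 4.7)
The plan is to exhibit a direct fpt-reduction that takes an instance $(\varphi, k)$ of $\smallunsatsubset$ and returns an instance $(\varphi', x, k)$ of $\kbackbone$ with the \emph{same} parameter $k$. The construction introduces a single fresh variable $x \notin \Var{\varphi}$ and sets
\[
\varphi' = \SB c \cup \SBs \neg x \SEs \SM c \in \varphi \SE.
\]
Padding every clause with $\neg x$ is designed to convert ``some subset $\varphi_0 \subseteq \varphi$ is unsatisfiable'' into ``the corresponding subset of $\varphi'$ entails $\neg x$''. The mapping is computable in linear time, the parameter is unchanged, and $x$ occurs in $\varphi'$ as long as $\varphi$ is nonempty (which may be assumed without loss of generality); hence the formal conditions on an fpt-reduction are trivially met.

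For the forward direction of correctness, given an unsatisfiable $\varphi_0 \subseteq \varphi$ with $\Card{\varphi_0} \leq k$, take the associated $\varphi_0' \subseteq \varphi'$ obtained by adding $\neg x$ to each clause. Any assignment satisfying $\varphi_0'$ must set $x = 0$, since otherwise $\neg x$ would contribute nothing to the clauses and $\varphi_0$ itself would have to be satisfied by the remaining variables, contradicting unsatisfiability. Hence $\varphi_0' \models \neg x$, which witnesses that $x$ is a $k$\hy backbone of $\varphi'$.

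For the backward direction, suppose some $\varphi_0' \subseteq \varphi'$ with $\Card{\varphi_0'} \leq k$ satisfies $\varphi_0' \models x$ or $\varphi_0' \models \neg x$. Because the single assignment $x = 0$ already satisfies every clause of $\varphi'$, the subset $\varphi_0'$ is satisfiable, and therefore $\varphi_0' \not\models x$; so it must be that $\varphi_0' \models \neg x$. Stripping $\neg x$ from the clauses of $\varphi_0'$ yields some $\varphi_0 \subseteq \varphi$ with $\Card{\varphi_0} \leq k$. If $\varphi_0$ were satisfiable, extending a satisfying assignment by $x = 1$ would satisfy $\varphi_0'$ while falsifying $\neg x$, contradicting the entailment; so $\varphi_0$ is the desired small unsatisfiable subset.

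The only step that needs a moment of care is ruling out the spurious case $\varphi_0' \models x$, for which the universal satisfying assignment $x = 0$ serves as a witness. Beyond this small bookkeeping, the reduction is a routine syntactic encoding of unsatisfiability as a unit consequence, and I do not anticipate any real obstacle.
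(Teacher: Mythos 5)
Your reduction is correct and is essentially the paper's own construction, just with the dual polarity: the paper pads every clause with a fresh positive literal $z$ (so a small unsatisfiable subset corresponds to a subset entailing $z$), while you pad with $\neg x$ and argue entailment of $\neg x$, ruling out the spurious polarity via the all-clauses-satisfied assignment $x=0$. The correctness argument in both directions matches the paper's proof step for step, so nothing further is needed.
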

\begin{proof}
Let $(\varphi,k)$ be an instance of \smallunsatsubset{}.
We construct an instance $(\varphi',z,k)$ of \kbackbone{},
by letting $\varphi' = \SB c \cup \SBs z \SEs \SM c \in \varphi \SE$
for some $z \not\in \Var{\varphi}$.
We claim that $(\varphi,k) \in \smallunsatsubset{}$ if and only if
$(\varphi',z,k) \in \kbackbone{}$.
\shortversion{
A complete proof of this claim can be found in the full version of the paper.
}
\longversion{

$(\Rightarrow)$ Assume $(\varphi,k) \in \smallunsatsubset{}$.
Then there exists an unsatisfiable $\varphi'' \subseteq \varphi$ with
$\Card{\varphi''} \leq k$.
Now consider $\chi = \SB c \cup \SBs z \SEs \SM c \in \varphi'' \SE$.
Clearly, $\chi \subseteq \varphi'$ and $\Card{\chi} \leq k$.
Also, since $\varphi''$ is unsatisfiable, we get $\chi \models z$.
Thus $\chi$ witnesses that $z$
is a $k$\hy backbone of $\varphi'$.

$(\Leftarrow)$ Assume $(\varphi',k,z) \in \kbackbone{}$.
Since $\neg z$ does not occur in $\varphi'$,
this means that there exists a $\varphi'' \subseteq \varphi'$ with
$\Card{\varphi''} \leq k$ such that $\varphi'' \models z$.
Now take $\chi = \SB c \backslash \SBs z \SEs \SM c \in \varphi'' \SE$.
We get that $\chi \subseteq \varphi$ and $\Card{\chi} \leq k$.
Also, we know that $\chi$ is unsatisfiable,
since otherwise it would not hold that $\varphi'' \models z$.
Therefore, $(\varphi,k) \in \smallunsatsubset{}$.
}
\end{proof}

\longversion{
\noindent The reduction in the proof of Lemma \ref{lem:2} shows
that instances $(\varphi,z,k)$ of \kbackbone{}
lead to equivalent instances of \smallunsatsubset{}
by simply taking the disjoint union
of the reducts of $\varphi$ with respect to both $z$ and $\neg z$.
}

\begin{lemma}
\label{lem:2}
$\kbackbone{} \fptreduction \smallunsatsubset{}$.
\end{lemma}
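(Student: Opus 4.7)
The plan is to reduce an instance $(\varphi, z, k)$ of \kbackbone{} to the instance $(\psi, k)$ of \smallunsatsubset{}, where $\psi$ is the disjoint union of the two reducts $\varphi|_z$ and $\varphi|_{\neg z}$, with one side rewritten over a fresh copy of its variables so that the two parts share no variables. Concretely, I would let $\sigma$ be the renaming sending each $x \in \Var{\varphi} \setminus \SBs z \SEs$ to a fresh variable $x'$, and set $\psi = \varphi|_z \cup \sigma(\varphi|_{\neg z})$. This construction is computable in linear time and leaves the parameter unchanged, so the computational and parameter-growth conditions for an fpt-reduction are immediate.

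The forward direction will rest on the elementary equivalence that $\varphi' \models z$ iff $\varphi'|_{\neg z}$ is unsatisfiable. If $\varphi' \subseteq \varphi$ with $\Card{\varphi'} \leq k$ witnesses that $z$ is a $k$-backbone via $\varphi' \models z$, then $\varphi'|_{\neg z}$ is unsatisfiable and $\sigma(\varphi'|_{\neg z}) \subseteq \psi$ is an unsatisfiable subset of $\psi$ of size at most $k$; the case $\varphi' \models \neg z$ symmetrically produces an unsatisfiable subset inside the $\varphi|_z$ part of $\psi$.

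The backward direction is where the disjoint renaming pays off. Given an unsatisfiable $\psi' \subseteq \psi$ with $\Card{\psi'} \leq k$, I would split it along the two variable-disjoint components as $\psi' = \psi'_1 \cup \psi'_2$, where $\psi'_1 \subseteq \varphi|_z$ and $\psi'_2 \subseteq \sigma(\varphi|_{\neg z})$. Since these components share no variables, any two satisfying assignments of $\psi'_1$ and $\psi'_2$ could be combined into a satisfying assignment of $\psi'$; hence at least one of $\psi'_1,\psi'_2$ must already be unsatisfiable. Lifting the unsatisfiable side back through the reduction then recovers a subset $\chi \subseteq \varphi$ with $\Card{\chi} \leq k$ such that $\chi|_{\neg z}$ (respectively $\chi|_z$) is unsatisfiable, that is, $\chi \models z$ (respectively $\chi \models \neg z$), so $z$ is a $k$-backbone of $\varphi$.

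I expect the one step worth pausing on to be the variable-disjoint decomposition in the backward direction: it is what forces the use of the renaming $\sigma$. Without disjointness, a clause from $\varphi|_z$ together with a clause from $\varphi|_{\neg z}$ could be jointly unsatisfiable while neither side is unsatisfiable on its own, and the lifting step would fail. Everything else is routine bookkeeping relating $\varphi'|_l$ to $\varphi'$ and tracking clause counts.
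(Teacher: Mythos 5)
Your proposal is correct and follows essentially the same route as the paper: the disjoint union of the two reducts $\varphi|_{z}$ and $\varphi|_{\neg z}$ over disjoint variable copies, the equivalence $\varphi'\models z$ iff $\varphi'|_{\neg z}$ is unsatisfiable for the forward direction, and the variable-disjointness argument plus lifting of preimage clauses for the backward direction. The only difference is cosmetic (the paper renames both copies, you rename one side), so nothing further is needed.
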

\begin{proof}
Let $(\varphi,z,k)$ be an instance of \kbackbone{}.
We construct an instance $(\psi,k)$ of \smallunsatsubset{}.
For every variable $x \in \Var{\varphi}$ we take two copies $x_1,x_2$.
For $i \in \SBs 1,2 \SEs$ we let $\varphi_i$ be a copy of $\varphi$
using the variables $x_i$.
Now we define $\psi = \varphi_1|_{z_1} \cup \varphi_2|_{\neg z_2}$.
In other words, $\psi$ is the union of two disjoint copies of the reducts of $\varphi$
with respect to $z$ and $\neg z$.
We claim that $(\varphi,z,k) \in \kbackbone{}$ if and only if
$(\psi,k) \in \smallunsatsubset{}$.
\shortversion{
A complete proof of this claim can be found in the full version of the paper.
}
\longversion{

$(\Rightarrow)$
Assume $z$ is a $k$\hy backbone of $\varphi$.
This means there exists a $\varphi' \subseteq \varphi$
with $\Card{\varphi'} \leq k$ such that
either $\varphi' \models z$ or $\varphi' \models \neg z$.
Assume without loss of generality that $\varphi' \models z$.
Then $\varphi'|_{\neg z}$ is unsatisfiable.
One can see this as follows.
Assume the contrary, i.e., that $\varphi'|_{\neg z}$ is satisfiable.
This means there is a valuation $V$ that satisfies all clauses in $\varphi'|_{\neg z}$.
Let $V'$ be a valuation for $\varphi'$ defined by
\[ V'(x) = \begin{dcases*}
  0 & if $x = z$ \\
  V(x) & otherwise. \\
\end{dcases*} \]
We show that $V$ satisfies $\varphi'$.
For each clause $c \in \varphi'$ such that $\neg z \in c$,
we clearly have that $V'$ satisfies $c$, because $V'(z) = 0$.
For all other clauses $c \in \varphi'$ with $\neg z \not\in c$,
we know $V'$ satisfies $c$, by the following argument.
Because $c' = c \backslash \SBs z \SEs \in \varphi'|_{\neg z}$,
we know that $V$ satisfies some literal in $c'$.
Therefore, we know that $V'$ satisfies $c$.
This is a contradiction to the fact that $\varphi' \models z$.
Thus, $\varphi'|_{\neg z}$ is unsatisfiable.
Furthermore, we know that $\Card{(\varphi'|_{\neg z})} \leq \Card{\varphi'} \leq k$.
Also, since $\varphi' \subseteq \varphi$,
we know that $\varphi'|_{\neg z} \subseteq \varphi|_{\neg z}$.
Then, by the fact that $\varphi_2|_{\neg z_2}$ is a copy of
$\varphi|_{\neg z}$,
we know that $(\psi,k) \in \smallunsatsubset{}$.

$(\Leftarrow)$
Assume $(\psi,k) \in \smallunsatsubset{}$.
This means there exists an unsatisfiable $\psi' \subseteq \psi$
with $\Card{\psi'} \leq k$.
Since $\psi = \varphi_1|_{z_1} \cup \varphi_2|_{\neg z_2}$
and $\varphi_1|_{z_1}$ and $\varphi_2|_{\neg z_2}$ are disjoint,
we can assume without loss of generality that
either $\psi' \subseteq \varphi_1|_{z_1}$ or $\psi' \subseteq \varphi_2|_{\neg z_2}$.
Suppose that $\psi' \subseteq \varphi_2|_{\neg z_2}$,
and the other case is similar.
We then know that there is a subset $\varphi' \subseteq \varphi$
such that $\psi'$ is a copy of $\varphi'|_{\neg z}$.
Take such a $\varphi'$ of minimal size.
We then know that there is no clause $c \in \varphi'$
such that $\neg z \in c$.
Therefore, we know that $\Card{\varphi'} = \Card{\psi'} \leq k$.
We now show that $\varphi' \models z$.
Assume the contrary, i.e., that there exists an assignment $V$
with $V(z) = 0$ that satisfies $\varphi'$.
Then this $V$ would also satisfy $\varphi'|_{\neg z}$.
From this one can straightforwardly construct an assignment that satisfies
$\psi'$, which contradicts our assumption that $\psi'$ is unsatisfiable.
Thus $\varphi'$ witnesses that $z$ is a $k$\hy backbone of $\varphi$.
}
\end{proof}

\begin{theorem}
\label{thm:1}
\kbackbone{} is \W{1}-complete.
\end{theorem}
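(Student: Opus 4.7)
The plan is to derive Theorem~\ref{thm:1} as an immediate consequence of Lemmas~\ref{lem:1} and~\ref{lem:2} together with the result of Fellows et al.~\cite{FellowsSzeiderWrightson06} that $\smallunsatsubset$ is $\W{1}$\hy complete. Since the two lemmas establish fpt-reductions in both directions between $\kbackbone$ and $\smallunsatsubset$, the parameterized complexities of the two problems coincide, and the theorem follows directly.

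More concretely, I would first argue membership in $\W{1}$: by Lemma~\ref{lem:2} we have $\kbackbone \fptreduction \smallunsatsubset$, and since the class $\W{1}$ is closed under fpt-reductions and $\smallunsatsubset \in \W{1}$ by~\cite{FellowsSzeiderWrightson06}, we obtain $\kbackbone \in \W{1}$. Then I would argue $\W{1}$\hy hardness: by Lemma~\ref{lem:1} we have $\smallunsatsubset \fptreduction \kbackbone$, and since $\smallunsatsubset$ is $\W{1}$\hy hard by~\cite{FellowsSzeiderWrightson06}, it follows that $\kbackbone$ is $\W{1}$\hy hard as well.

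There is essentially no technical obstacle here, since all of the work has been done in the preceding two lemmas and in the cited paper; the role of the theorem is to record the equivalence packaged by those lemmas. The only mild subtlety worth explicitly noting in the write-up is that the class $\W{1}$ is indeed closed under fpt-reductions (a standard fact from parameterized complexity), which justifies the membership direction.
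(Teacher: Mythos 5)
Your proposal is correct and follows exactly the paper's own argument: membership in $\W{1}$ via Lemma~\ref{lem:2} and the $\W{1}$-completeness of $\smallunsatsubset$ from Fellows et al., and $\W{1}$-hardness via Lemma~\ref{lem:1}. The paper states this in one line; you merely spell out the two directions explicitly.
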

\begin{proof}
Since \smallunsatsubset{} is \W{1}-complete~\cite{FellowsSzeiderWrightson06},
the result follows from Lemmas \ref{lem:1} and \ref{lem:2}.
\end{proof}

\section{Local Backbones of Horn and Krom Formulas}
\paragraph{Horn formulas}
Restricting the problem of finding backbones in arbitrary formulas
to Horn formulas reduces the classical complexity from \coNP{}-completeness
to polynomial time solvability.
It is a natural question whether the parameterized complexity of finding local backbones
decreases in a similar way when the problem is restricted to Horn formulas.
We will show that this is not the case.
In order to do so, we define the parameterized problem \shorthyperpath{},
show that it is \W{1}-hard,
and then provide fpt-reductions from \shorthyperpath{}.

For a Horn formula $\varphi$
and $s,t \in \Var{\varphi}$,
we say that a subformula $\varphi' \subseteq \varphi$
is a \emph{hyperpath} from $s$ to $t$
if (i) $t = s$ or
(ii) $c = \SBs x_1,\dotsc,x_n,t \SEs \in \varphi'$
and $\varphi' \backslash c$ is a hyperpath from
$s$ to $x_i$ for each $1 \leq i \leq n$.
If $\Card{\varphi} \leq k$ then $\varphi$
is called a \emph{$k$\hy hyperpath}.
The parameterized problem \shorthyperpath{}
takes as input a Horn formula $\varphi$,
two variables $s,t \in \Var{\varphi}$ and an integer $k$.
The problem is parameterized by $k$.
The question is whether there exists a $k$\hy hyperpath
from $s$ to $t$.
For a more detailed discussion on the relation
between (backward) hyperpaths in hypergraphs
and hyperpaths as defined above, we refer to a survey
article by Gallo et al.~\cite{Gallo}.

For the hardness proof of \shorthyperpath{},
we reduce from the \W{1}-complete problem \MCC{}~\cite{FellowsHermelinRosamondVialette09}.
The \MCC{} problem takes as input a graph $G$,
some integer $k$, and a proper $k$\hy coloring $c$ of the vertices of $G$.
The problem is parameterized by $k$.
The question is whether there is a properly colored $k$\hy clique in $G$.

\begin{lemma}
\label{lem:5}
\shorthyperpath{} is \W{1}-hard,
even for instances $(\varphi,s,t,k)$ where
$\varphi \in 3\CNF{}$.
\vspace{-3mm}
\end{lemma}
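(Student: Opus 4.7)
The plan is to fpt-reduce from \MCC{}, which is \W{1}-complete~\cite{FellowsHermelinRosamondVialette09}. Given an \MCC{} instance $(G, k, c)$ with proper coloring $c: V(G) \to \SBs 1,\dotsc,k \SEs$, I build a Horn 3-CNF formula $\varphi$, two variables $s, t \in \Var{\varphi}$, and threshold $k' = k^2 - 1$, and show that $G$ has a multicolored $k$-clique if and only if $\varphi$ admits a $k'$\hy hyperpath from $s$ to $t$.

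The variables of $\varphi$ are $s$, $t$, a variable $x^i_v$ for each vertex $v$ of color $i$, a variable $y_{i,j}$ for each pair of colors $i < j$, and auxiliary variables $z_1, \dotsc, z_{\binom{k}{2}-2}$. Writing a definite Horn clause $\SBs \neg b_1, \dotsc, \neg b_m, h \SEs$ in the body-to-head form $(\SBs b_1,\dotsc,b_m \SEs \to h)$, $\varphi$ contains a \emph{vertex clause} $(\SBs s \SEs \to x^i_v)$ for every vertex $v$ of color $i$; an \emph{edge clause} $(\SBs x^i_u, x^j_v \SEs \to y_{i,j})$ for every edge $\SBs u, v \SEs \in E(G)$ with $c(u) = i < j = c(v)$; and, for a fixed ordering $y_1, \dotsc, y_n$ of the $n = \binom{k}{2}$ variables $y_{i,j}$, a chain of \emph{combining clauses} $(\SBs y_1, y_2 \SEs \to z_1), (\SBs z_1, y_3 \SEs \to z_2), \dotsc, (\SBs z_{n-2}, y_n \SEs \to t)$. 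Every clause is Horn and has at most $3$ literals; there are $\binom{k}{2} - 1$ combining clauses, and both $t$ and every $z_a$ occurs as the head of a unique clause of $\varphi$.

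For the forward direction, from a multicolored clique $\SBs u_1, \dotsc, u_k \SEs$ with $c(u_i) = i$, I take $\varphi'$ to consist of the $k$ vertex clauses for the $u_i$, the $\binom{k}{2}$ edge clauses for the clique edges, and all $\binom{k}{2} - 1$ combining clauses. A straightforward bottom-up induction along the chain shows that $\varphi'$ is a hyperpath from $s$ to $t$, and its size is $k + \binom{k}{2} + (\binom{k}{2} - 1) = k^2 - 1$.

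The backward direction is the main obstacle, addressed by a tight counting argument. Suppose $\varphi'$ is a hyperpath from $s$ to $t$ with $\Card{\varphi'} \leq k^2 - 1$. Since $t$ and every $z_a$ is the head of a unique clause of $\varphi$, unrolling the recursive definition from $t$ forces all $\binom{k}{2} - 1$ combining clauses into $\varphi'$ and forces $\varphi'$ to be a hyperpath from $s$ to every $y_{i,j}$. Consequently $\varphi'$ contains at least one edge clause per pair $i < j$, accounting for at least $\binom{k}{2}$ further clauses. Let $r_i$ be the number of distinct variables $x^i_u$ that occur in the bodies of edge clauses in $\varphi'$; each such $x^i_u$ must be derivable from $s$ inside $\varphi'$, and the only clause of $\varphi$ with head $x^i_u$ is its vertex clause, so $\varphi'$ contains at least $\sum_i r_i \geq k$ vertex clauses (using $r_i \geq 1$ for every $i$). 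Summing yields $\Card{\varphi'} \geq (\binom{k}{2} - 1) + \binom{k}{2} + k = k^2 - 1$, so equality must hold throughout: $\varphi'$ has exactly one edge clause per pair $i < j$, and exactly one vertex $u_i$ of color $i$ appears in every edge clause involving color $i$. The presence of the edge clause for $y_{i,j}$ then witnesses $\SBs u_i, u_j \SEs \in E(G)$, so $\SBs u_1, \dotsc, u_k \SEs$ is a multicolored $k$-clique in $G$. The construction runs in polynomial time and maps the parameter $k$ to $k^2 - 1$, so it is an fpt-reduction.
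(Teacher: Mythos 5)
Your proposal is correct and follows essentially the same route as the paper: a reduction from \MCC{} with clauses $s\rightarrow v$ for vertices, $\{u,v\}\rightarrow p_{i,j}$ for edges, a gadget collecting all pair-variables into $t$, and a tight budget argument forcing exactly one edge clause per color pair and one vertex per color. The only difference is cosmetic: you build the binary combining chain (3-CNF) directly with parameter $k^2-1$, whereas the paper first uses a single large clause with $k'=k+\binom{k}{2}+1$ and then splits it into a chain.
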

\begin{proof}
We give a reduction from \MCC{}.
Let $(G,k,c)$ be an instance of \MCC{},
where $G = (V,E)$ and $V_1,\dotsc,V_k$ are
the equivalence classes of $V$
induced by the $k$\hy coloring $c$.
We construct an instance $(\varphi,s,t,k')$ of \shorthyperpath{},
where $k' = k + \binom{k}{2} + 1$ and\\[-5pt]
\[ \begin{array}{r l}
  \Var{\varphi} = & \SBs s,t \SEs \cup V \cup \SB p_{i,j} \SM 1 \leq i < j \leq k \SE ; \\
  \varphi = & \varphi_V \cup \varphi_{p} \cup \varphi_{t} ; \\
  \varphi_V = & \SB \SBs \neg s, v \SEs \SM v \in V \SE ; \\
  \varphi_p = & \SB \SBs \neg v_i, \neg v_j, p_{i,j} \SEs \SM 1 \leq i < j \leq k, v_i \in V_i, v_j \in V_j, \SBs v_i, v_j \SEs \in E \SE ; \\
  \varphi_t = & \SBs \SB \neg p_{i,j} \SM 1 \leq i < j \leq k \SE \cup \SBs t \SEs \SEs.
\end{array} \]
\sloppypar \noindent This construction is illustrated
for an example with $k=3$ in Figure~\ref{fig:hyperpathclique}.
We claim that $(G,k,c) \in \MCC{}$ if and only if $(\varphi,s,t,k') \in \shorthyperpath{}$.
\shortversion{
A complete proof of this claim can be found in the full version of the paper.
}
\longversion{

$(\Rightarrow)$
Assume $(G,k,c) \in \MCC{}$.
Then there exists a clique $V'$ of $G$
with $\Card{V \cap V_i} = 1$ for all $1 \leq i \leq k$.
We construct a $k'$-hyperpath $\varphi'$ from $s$ to $t$.
We define:
\[ \begin{array}{r l}
  \varphi_{V'} = & \SB (\SBs s \SEs,v) \SM v \in V' \SE \cup \varphi_t\ \cup \\
  & \SB ( \SBs v_i,v_j \SEs, p_{i,j}) \SM 1 \leq i < j \leq k, v_i \in V_i \cap V', v_j \in V_j \cap V', \SBs v_i, v_j \SEs \in E \SE
\end{array} \]
It is straightforward to verify that
$\varphi_{V'}$ is a $k'$-hyperpath from $s$
to $t$.

$(\Leftarrow)$
Assume $(\varphi,s,t,k') \in \shorthyperpath{}$.
Then there exists a $k'$-hyper\-path $\varphi'$
from $s$ to $t$.
We know that $\varphi_t \subseteq \varphi'$,
since $\varphi_t$ contains the unique clause in $\varphi$ with $t$ occurring positively.
Since $\Card{\varphi'} \leq k'$, we know that
in order for $\varphi'$ to be a hyperpath from $s$ to $t$,
we have $\Card{\varphi_V \cap \varphi'} = k$ and $\Card{\varphi_p \cap \varphi'} = \binom{k}{2}$.
It is then straightforward to verify that
the set $V' = \SB v \in V \SM \SBs \neg s, v \SEs \in \varphi' \SE$
witnesses that $G$ has a $k$\hy clique
containing one node in each $V_i$.
}

To see that clauses of size at most $3$ in the hyperpath
suffice, we slightly adapt the reduction.
The only clause we need to change is the
single clause $e \in \varphi_t$.
This clause $e$ is of the form $\SBs \neg p_1, \dotsc, \neg p_m, t \SEs$,
for $m = \binom{k}{2}$.
We introduce new variables $v_1,\dotsc,v_m$
and replace $e$ by the $m + 1$ many clauses
$\SBs \neg p_1, v_1 \SEs$,
$\SBs \neg v_{i-1}, \neg p_i, v_i \SEs$ for all $1 < i \leq m$
and $\SBs \neg v_m, t \SEs$.
Clearly, the resulting Horn formula only has clauses
of size at most $3$.
This adapted reduction works with the exact same line of reasoning
as the reduction described above,
with the only change that $k' = k + 2\binom{k}{2} + 1$.

Note that even the slightly stronger claim holds
that $G$ has a properly colored $k$\hy clique if and only if
there exists a (subset) minimal $k'$-B-hyperpath $\varphi' \subseteq \varphi$
for which we have $\Card{\varphi'} = k'$.
\end{proof}

\begin{figure}[tb]
  \centering
  \begin{subfigure}[b]{0.3\textwidth}
  {
    \centering
    \begin{tikzpicture}[node distance=10pt]
    
      \node[] (a1) {$\cdot$};
      \node[below of=a1] (a2) {$\cdot$};
      \node[below of=a2] (a3) {$\cdot$};
      \node[below of=a3] (a4) {$\cdot$};
      \node[below of=a4] (a5) {$\cdot$};
    
      \node[above right of=a1, node distance=25pt] (b1) {$\cdot$};
      \node[right of=b1] (b2) {$\cdot$};
      \node[right of=b2] (b3) {$\cdot$};
      \node[right of=b3] (b4) {$\cdot$};
      \node[right of=b4] (b5) {$\cdot$};
    
      \node[below right of=a5, node distance=25pt] (c1) {$\cdot$};
      \node[right of=c1] (c2) {$\cdot$};
      \node[right of=c2] (c3) {$\cdot$};
      \node[right of=c3] (c4) {$\cdot$};
      \node[right of=c4] (c5) {$\cdot$};
    
    
      \path[draw,-] (a2) -- (b4);
      \path[draw,-] (b4) -- (c3);
      \path[draw,-] (a2) -- (c3);
    
      \path[draw,-,gray!80] (a2) -- (b1);
      \path[draw,-,gray!80] (a1) -- (b2);
      \path[draw,-,gray!80] (a3) -- (b3);
      \path[draw,-,gray!80] (a3) -- (b5);
      \path[draw,-,gray!80] (a5) -- (b4);
      \path[draw,-,gray!80] (a5) -- (c1);
      \path[draw,-,gray!80] (a5) -- (c2);
      \path[draw,-,gray!80] (a3) -- (c5);
      \path[draw,-,gray!80] (a4) -- (c3);
      \path[draw,-,gray!80] (b2) -- (c2);
      \path[draw,-,gray!80] (b4) -- (c4);
      
      \begin{pgfonlayer}{background}
        \node [fill = gray!20, fit=(a1) (a2) (a3) (a4) (a5), rounded corners, inner sep=1pt, label=left:{1}] {};
        \node [fill = gray!20, fit=(b1) (b2) (b3) (b4) (b5), rounded corners, inner sep=1pt, label=above:{2}] {};
        \node [fill = gray!20, fit=(c1) (c2) (c3) (c4) (c5), rounded corners, inner sep=1pt, label=below:{3}] {};
      \end{pgfonlayer}
      
    \end{tikzpicture}
    \caption{A 3-partite graph $G$ with a clique (in black).}
  }
  \end{subfigure}
  \hspace{20pt}
  \begin{subfigure}[b]{0.60\textwidth}
  {
    \centering
    \begin{tikzpicture}[node distance=10pt]
    
      \node[] (a1) {$\cdot$};
      \node[below of=a1] (a2) {$\cdot$};
      \node[below of=a2] (a3) {$\cdot$};
      \node[below of=a3] (a4) {$\cdot$};
      \node[below of=a4] (a5) {$\cdot$};
    
      \node[above right of=a1, node distance=25pt] (b1) {$\cdot$};
      \node[right of=b1] (b2) {$\cdot$};
      \node[right of=b2] (b3) {$\cdot$};
      \node[right of=b3] (b4) {$\cdot$};
      \node[right of=b4] (b5) {$\cdot$};
    
      \node[below right of=a5, node distance=25pt] (c1) {$\cdot$};
      \node[right of=c1] (c2) {$\cdot$};
      \node[right of=c2] (c3) {$\cdot$};
      \node[right of=c3] (c4) {$\cdot$};
      \node[right of=c4] (c5) {$\cdot$};

      \node[left of=a2, node distance=35pt] (s) {$s$};
      \node[above of=a1, node distance=30pt, xshift=0pt] (sb1) {};
      \node[below of=a5, node distance=30pt, xshift=0pt] (sc1) {};
      \node[right of=sb1, node distance=-7pt] (sb2) {};
      \node[right of=sc1, node distance=-7pt] (sc2) {};
      \node[right of=sb2, node distance=45pt] (sb3) {};
      \node[right of=sc2, node distance=40pt] (sc3) {};
      \node[right of=sb3, node distance=-7pt] (sb4) {};
      \node[right of=sc3, node distance=-7pt] (sc4) {};
      
      \node[below right of=b5, node distance=25pt, yshift=-0pt] (yab') {};
      \node[below of=yab', node distance=20pt] (ybc') {};
      \node[below of=ybc', node distance=20pt] (yac') {};
      \node[right of=yab'] (yab) {$p_{1,2}$};
      \node[right of=ybc'] (ybc) {$p_{2,3}$};
      \node[right of=yac'] (yac) {$p_{1,3}$};
      
      \node[right of=ybc, node distance=30pt] (t') {};
      \node[right of=t', node distance=10pt] (t) {$t$};
      
    
      \path[draw,->] (s) -- (a2);
      \path[draw,-] (s) to[out=0, in=180] (sb1);
      \path[draw,-] (s) to[out=0, in=180] (sc1);
      \path[draw,-] (sb2) -- (sb3);
      \path[draw,-] (sc2) -- (sc3);
      \path[draw,->] (sb4) to[out=0, in=90] (b4);
      \path[draw,->] (sc4) to[out=0, in=270] (c3);
    
      \path[draw,-] (a2) to[out=0,in=180] (yab');
      \path[draw,-] (b4) to[out=270,in=180] (yab');
      \path[draw,<-] (yab) to[xshift=-30pt] (yab');
      
      \path[draw,-] (b4) to[out=270,in=180] (ybc');
      \path[draw,-] (c3) to[out=90,in=180] (ybc');
      \path[draw,<-] (ybc) to[xshift=-30pt] (ybc');
      
      \path[draw,-] (a2) to[out=0,in=180] (yac');
      \path[draw,-] (c3) to[out=90,in=180] (yac');
      \path[draw,<-] (yac) to[xshift=-30pt] (yac');
      
      \path[draw,-] (yab) to[out=0,in=180] (t');
      \path[draw,-] (ybc) to[out=0,in=180] (t');
      \path[draw,-] (yac) to[out=0,in=180] (t');
      \path[draw,<-] (t) to[xshift=-30pt] (t');
      
      \begin{pgfonlayer}{background}
        \node [fill = gray!20, fit=(a1) (a2) (a3) (a4) (a5), rounded corners, inner sep=1pt] {};
        \node [fill = gray!20, fit=(b1) (b2) (b3) (b4) (b5), rounded corners, inner sep=1pt] {};
        \node [fill = gray!20, fit=(c1) (c2) (c3) (c4) (c5), rounded corners, inner sep=1pt] {};
      \end{pgfonlayer}
      
    \end{tikzpicture}
    \caption{The B-hyperpath in $H$ of size $k' = 3 + \binom{3}{2} + 1$ from $s$ to $t$ corresponding to the clique.}
  }
  \end{subfigure}
  \caption{Illustration of the reduction in the proof of Lemma~\ref{lem:5} for the case of a 3-colored clique.}
  \shortversion{\vspace{-10pt}}
  \label{fig:hyperpathclique}
\end{figure}

\noindent We are now in a position to prove the \W{1}-hardness of $\kbackbone[\Horn]$.
In fact, we show that finding local backbones is already \W{1}-hard
for definite Horn formulas with a single unit clause.
We also show that this hardness crucially depends on allowing
unit clauses in the formula,
since for definite Horn formulas without unit clauses the problem is trivial.
In fact, the complexity jumps to \W{1}-hardness
already when allowing a single unit clause.

\longversion{
\begin{lemma}
\label{lem:nudefhorn}
Definite Horn formulas
without unit clauses
have no backbones.
\end{lemma}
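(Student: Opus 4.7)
The plan is to exhibit two specific satisfying assignments that together show every variable can take either truth value, so no variable is forced.

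First, I would consider the all-ones assignment $\tau_1 : \Var{\varphi} \to \SBs 0,1 \SEs$ defined by $\tau_1(x) = 1$ for every $x$. Since $\varphi$ is a \emph{definite} Horn formula, every clause $c \in \varphi$ contains exactly one positive literal, say $y$. Then $\tau_1(y) = 1$ satisfies $c$, so $\tau_1$ satisfies $\varphi$.

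Second, I would consider the all-zeros assignment $\tau_0$ defined by $\tau_0(x) = 0$ for every $x$. Here I use the hypothesis that $\varphi$ contains no unit clauses: since each clause has exactly one positive literal and has size at least $2$, every clause must contain at least one negative literal $\neg z$. Then $\tau_0(z) = 0$ satisfies $c$, so $\tau_0$ satisfies $\varphi$.

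Finally, for any variable $x \in \Var{\varphi}$, we have $\tau_1(x) = 1$ and $\tau_0(x) = 0$, so $x$ is neither enforced to be true nor enforced to be false by $\varphi$; that is, $x$ is not a backbone. Since $x$ was arbitrary, $\varphi$ has no backbones. There is no real obstacle here—the lemma follows directly from the two structural observations above, and in fact the argument would break only if we allowed unit clauses (which is precisely the case excluded in the hypothesis) or allowed non-definite Horn clauses (which could lack a positive literal, invalidating the first assignment).
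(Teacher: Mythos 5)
Your proof is correct and follows essentially the same route as the paper's: both use the all-true assignment (satisfied via the unique positive literal of each definite clause) and the all-false assignment (satisfied via a negative literal, which exists because there are no unit clauses) to conclude that no variable's value is forced.
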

\begin{proof}
Consider the two valuations $I_{\top}$ and $I_{\bot}$,
where $I_{\top}(x) = \top$ and $I_{\bot}(x) = \bot$ for all $x \in \Var{\varphi}$.
Since $\varphi \in \DefHorn$ and $\varphi$
has no unit clauses,
we know that each clause has one positive
and at least one negative literal.
Thus both $I_{\top}$ and $I_{\bot}$ satisfy $\varphi$.
Therefore, no $x \in \Var{\varphi}$
is a backbone of $\varphi$.
\end{proof}
}

\begin{theorem}
\label{thm:bb-defhorn}
$\kbackbone[\DefHorn \cap 3\CNF]$ is \W{1}-hard,
already for instances $(\varphi,x,k)$ where
$\varphi$ has at most one unit clause.
\end{theorem}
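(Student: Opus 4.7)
The plan is to reduce from $\shorthyperpath$. A quick inspection of the proof of Lemma~\ref{lem:5} shows that every clause of the constructed formula --- in $\varphi_V$, in $\varphi_p$, and in the 3-CNF adaptation of $\varphi_t$ --- contains exactly one positive literal and at least one negative literal. Hence the formula produced is in fact a definite Horn 3-CNF formula without any unit clauses, so $\shorthyperpath$ restricted to this class is already $\W{1}$-hard; this is the only source of parameterized hardness I would invoke.

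Given an instance $(\varphi, s, t, k)$ of $\shorthyperpath$ in this restricted class, I would set $\varphi' = \varphi \cup \SBs \SBs s \SEs \SEs$ and output the $\kbackbone$-instance $(\varphi', t, k+1)$. Then $\varphi' \in \DefHorn \cap 3\CNF$ and contains exactly one unit clause, namely $\SBs s \SEs$, as required. The reduction is clearly computable in polynomial time with $k+1$ as the new parameter, so it is an fpt-reduction. What remains is to prove the equivalence: there is a $k$-hyperpath from $s$ to $t$ in $\varphi$ iff $t$ is a $(k+1)$-backbone of $\varphi'$.

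For the forward direction, given a hyperpath $H \subseteq \varphi$ of size at most $k$, a straightforward induction on the recursive definition of hyperpath shows $\SBs \SBs s \SEs \SEs \cup H \models t$, so $\SBs \SBs s \SEs \SEs \cup H$ witnesses the backbone. For the converse, let $\psi \subseteq \varphi'$ of size at most $k+1$ witness that $t$ is a $(k+1)$-backbone. Since every definite Horn formula admits the all-true assignment as a model, $\psi \not\models \neg t$, hence $\psi \models t$. Since $\varphi$ contains no unit clauses, every clause of $\varphi$ has a negative literal, so the all-false assignment satisfies $\varphi$; this forces $\SBs s \SEs \in \psi$, and $\psi \setminus \SBs \SBs s \SEs \SEs \subseteq \varphi$ has size at most $k$.

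The main obstacle is the final step: extracting from the entailment $\SBs \SBs s \SEs \SEs \cup (\psi \setminus \SBs \SBs s \SEs \SEs) \models t$ an actual hyperpath $H \subseteq \psi \setminus \SBs \SBs s \SEs \SEs$ from $s$ to $t$, rather than just an abstract entailment. I would invoke the completeness of forward chaining for definite Horn entailment: $t$ is derivable by iteratively applying clauses from $\psi \setminus \SBs \SBs s \SEs \SEs$ starting from $s$; by choosing, for each variable derived along a derivation of $t$, one clause that derives it, and collecting those chosen clauses, one obtains a subset of $\psi \setminus \SBs \SBs s \SEs \SEs$ that satisfies the recursive hyperpath definition and has at most $k$ clauses. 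Care is required to ensure that the collection really meets the hyperpath definition (the sub-hyperpath to each premise must itself be a sub-hyperpath of the same set), which follows by taking the union of the derivations rooted at each premise and arguing by induction on the length of the forward-chaining derivation.
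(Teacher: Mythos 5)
Your proposal is correct and follows essentially the same route as the paper: you add the single unit clause $\SBs s \SEs$ to the \shorthyperpath{} instance, ask whether $t$ is a $(k+1)$\hy backbone, and use the fact that definite Horn formulas without unit clauses are satisfied by both the all-true and all-false assignments to force $\SBs s \SEs$ into any witnessing subset. Your forward-chaining extraction just spells out the step the paper leaves as ``easy to verify,'' so there is nothing substantive to add.
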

\begin{proof}
We show \W{1}-hardness by reducing from \shorthyperpath{}.
Let $(\varphi,s,t,k)$ be an instance of \shorthyperpath{}.
We can assume that $\varphi \in 3\CNF{}$.
We construct an instance $(\psi_{\varphi},t,k')$ of \kbackbone{}.
Here $k' = k+1$.
For each $\varphi' \subseteq \varphi$ we define a formula $\psi_{\varphi'}$, by letting $\Var{\psi_{\varphi'}} = \Var{\varphi'}$
\shortversion{
and $\psi_{\varphi'} = \SBs \SBs s \SEs \SEs \cup \varphi'$.
}
\longversion{and:
\[ \begin{array}{r l}
  \psi_{\varphi'} = & \SBs \SBs s \SEs \SEs \cup \varphi'. \\
\end{array} \]
}
Clearly $\psi_{\varphi} \in \DefHorn \cap 3\CNF$
and $\psi_{\varphi}$ has only a single unit clause.
We claim that $(\psi_{\varphi},t,k') \in \kbackbone{}$ if and only if
$(\varphi,s,t,k) \in \shorthyperpath{}$.
\shortversion{
A complete proof of this claim can be found in the full version of the paper.
}
\longversion{

$(\Rightarrow)$
Assume that $t$ is a $k'$-backbone of $\psi_{\varphi}$.
Since $\psi_{\varphi} \in \DefHorn$,
we then know that there exists a $\psi' \subseteq \psi_{\varphi}$ with
$\Card{\psi'} \leq k'$ and $\psi' \models t$.
By Lemma \ref{lem:nudefhorn}, we know that $\SBs s \SEs \in \psi'$.
Now let $\varphi' \subseteq \varphi$ be the unique subset of clauses
such that $\psi' = \psi_{\varphi'}$.
We know that $\Card{\varphi'} \leq k$.
It is easy to verify that since $\psi' \models t$,
we get that $\varphi'$ is a $k$\hy hyperpath
from $s$ to $t$.

$(\Leftarrow)$ 
Assume that there exists a $k$\hy hyperpath $\varphi' \subseteq \varphi$
from $s$ to $t$ with $\Card{\varphi'} \leq k$.
Then $\psi_{\varphi'}$ witnesses that $t$ is a $k'$-backbone
of $\psi_{\varphi}$.
Clearly, $\Card{\psi_{\varphi'}} \leq k'$.
Also, it is straightforward to verify that since $\varphi'$ is
a $k$\hy hyperpath from $s$ to $t$,
it holds that $\psi_{\varphi'} \models t$.
}
\end{proof}

\noindent Also, restricting the problem to Horn formulas
without unit clauses
unfortunately does not yield fixed-parameter tractability.

\begin{theorem}
\label{thm:bb-horn-nuhorn}
$\kbackbone[\NUHorn \cap 3\CNF]$ is \W{1}-hard.
\end{theorem}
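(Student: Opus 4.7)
My plan is to reduce from \shorthyperpath{} restricted to $3$-CNF instances, which is \W{1}-hard by Lemma~\ref{lem:5}. The construction mirrors that of Theorem~\ref{thm:bb-defhorn}, with the unit clause $\SBs s \SEs$ replaced by two size-$2$ NUHorn clauses sharing a fresh activation variable $z$. Concretely, given $(\varphi,s,t,k)$, I set $\psi = \varphi \cup \SB \SBs \neg z, s \SEs, \SBs \neg z, \neg t \SEs \SE$ for some $z \notin \Var{\varphi}$, and ask whether $z$ is a $k'$-backbone of $\psi$ for $k' = k + 2$. Since the $\varphi$ produced by Lemma~\ref{lem:5} is definite Horn, $3$-CNF, and unit-clause-free, and the two new clauses are Horn, of size $2$, and non-unit, we have $\psi \in \NUHorn \cap 3\CNF$.

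The correctness proof rests on the observation that every NUHorn formula is satisfied by the all-false assignment (each Horn clause of size at least $2$ contains a negative literal), so no subformula of $\psi$ can entail $z$. Hence $z$ is a $k'$-backbone of $\psi$ iff there is a $\psi' \subseteq \psi$ with $\Card{\psi'} \leq k'$ such that $\psi'|_{z}$ is unsatisfiable. I would then carry out a short case analysis on which of the two auxiliary clauses lie in $\psi'$: writing $\varphi' = \psi' \cap \varphi$, if the auxiliary $\SBs \neg z, \neg t \SEs$ is missing then $\psi'|_{z}$ consists only of definite Horn clauses and is satisfied by all-true; if only $\SBs \neg z, \neg t \SEs$ is present then $\psi'|_{z} = \varphi' \cup \SBs \SBs \neg t \SEs \SEs$ is satisfied by all-false. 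Hence $\psi'$ must contain both auxiliary clauses, in which case $\psi'|_{z} = \varphi' \cup \SB \SBs s \SEs, \SBs \neg t \SEs \SE$ with $\Card{\varphi'} \leq k$, and this is unsatisfiable exactly when $\SBs \SBs s \SEs \SEs \cup \varphi' \models t$. By the argument used in the proof of Theorem~\ref{thm:bb-defhorn}, this last condition is equivalent to $\varphi'$ being a $k$\hy hyperpath from $s$ to $t$ in $\varphi$, and the converse direction of the reduction is immediate from the same correspondence.

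The main subtlety I expect lies in the forward case analysis, which must exclude the three ``degenerate'' configurations of $\psi'$ in which one or both auxiliary clauses is missing. Once this is handled, the remaining equivalence follows by directly reusing the hyperpath correspondence from the proof of Theorem~\ref{thm:bb-defhorn}, and since $k' = k + 2$ the mapping is clearly an fpt-reduction.
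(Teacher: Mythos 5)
Your reduction is correct, and it reaches the result by a slightly different gadget than the paper. The paper also reduces from \shorthyperpath{}, but it rebuilds the formula on renamed variables and encodes the contradiction \emph{inside} the copy: every clause in which $t$ occurs positively is replaced by the purely negative clause $\SBs \neg x_a,\neg x_b\SEs$, and the target is $x_s$, which is forced to false (with parameter $k$ unchanged) exactly when a $k$\hy hyperpath exists. You instead leave $\varphi$ untouched and bolt on an external activation variable $z$ with the two clauses $\SBs \neg z, s\SEs$ and $\SBs \neg z,\neg t\SEs$, asking whether $z$ is a $(k+2)$\hy backbone; the shared key observation in both arguments is that a unit-free Horn formula is satisfied by the all-false assignment, so only negative literals can be forced, and your three-way case analysis correctly pins the witness down to the configuration containing both auxiliary clauses, where unsatisfiability of $\varphi'\cup\SBs\SBs s\SEs,\SBs\neg t\SEs\SEs$ is exactly the entailment $\SBs\SBs s\SEs\SEs\cup\varphi'\models t$ handled in Theorem~\ref{thm:bb-defhorn}. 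Two points you rightly flag but should state explicitly in a full write-up: your case analysis needs the input $\varphi$ to be \emph{definite} Horn without unit clauses (all-true kills Case~1, all-false kills Case~2), so you must invoke the fact that the instances produced in the proof of Lemma~\ref{lem:5} have this form (W[1]-hardness of \shorthyperpath{} on that restricted class), whereas the paper's construction works from any $3$\CNF{} instance after its stated normalization; and the final equivalence should be phrased as ``$\varphi'$ contains a hyperpath from $s$ to $t$ of size at most $k$,'' which is what \shorthyperpath{} asks for. What your variant buys is a shorter, more modular argument that reuses the \DefHorn{} correspondence verbatim at the price of a parameter shift $k\mapsto k+2$; the paper's version keeps the parameter equal to $k$ and is self-contained, which it later exploits when arguing about subsets of size exactly $k'$ in Corollary~\ref{cor:ibb-horn}.
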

\begin{proof}
\sloppypar We show the \W{1}-hardness of $\kbackbone[\NUHorn \cap 3\CNF]$
by reducing from \shorthyperpath{}.
Let $(\varphi,s,t,k)$ be an instance of \shorthyperpath{}.
We can assume without loss of generality that $\varphi \in 3\CNF{}$,
and that each clause in which~$t$ occurs positively
is of size $3$.
We construct an instance $(\psi_{\varphi},x_s,k)$ of \kbackbone{}.
For each $\varphi' \subseteq \varphi$ we define a formula $\psi_{\varphi'}$.
\[ \begin{array}{r l}
  \psi_{\varphi'} = & \SB \SBs \neg x_{a}, \neg x_{b}, x_{c} \SEs \SM \SBs \neg a, \neg b, c \SEs \in \varphi', c \neq t \SE\ \cup \\
  & \SB \SBs \neg x_{a}, \neg x_{b} \SEs \SM \SBs \neg a, \neg b, t \SEs \in \varphi' \SE \\
  & \SB \SBs \neg x_{a}, x_{b} \SEs \SM \SBs \neg a, b \SEs \in \varphi' \SE \\
\end{array} \]
Clearly we have that $\psi_{\varphi} \in \Horn \cap 3\CNF$
and that $\psi_{\varphi}$ has no unit clauses.
We claim that $(\psi_{\varphi},x_s,k) \in \kbackbone{}$ if and only if
$(\varphi,s,t,k) \in \shorthyperpath{}$.
\shortversion{
A complete proof of this claim can be found in the full version of the paper.
}
\longversion{

$(\Rightarrow)$
Assume $x_s$ is a $k$\hy backbone of $\psi_{\varphi}$.
Since $\psi_{\varphi} \in \Horn$ and $\psi_{\varphi}$ has no unit clauses,
this means there exists a $\psi' \subseteq \psi_{\varphi}$ with $\Card{\psi'} \leq k$
such that $\psi' \models \neg x_s$.
Let $\varphi' \subseteq \varphi$ be the unique subset of clauses
such that $\psi' = \psi_{\varphi'}$.
We know that $\Card{\varphi'} \leq k$.
In order to show that $\varphi'$ is a hyperpath from $s$ to $t$,
we assume to the contrary that it is not.
We now define the assignment $\mu$
by letting $\mu(x_v) = \top$ for all $v \in \Var{\varphi}$
such that there exists a hyperpath $\varphi'' \subseteq \varphi'$ from $s$ to $v$
and letting $\mu(x_v) = \bot$ for all other $v \in \Var{\varphi}$.
We know that $\mu$ does not satisfy $\psi_{\varphi'}$
only if $\mu$ does not satisfy $x_a \wedge x_b$
for some $\SBs \neg x_a, \neg x_b \SEs \in \psi_{\varphi'}$
and if there exists a hyperpath from $s$
to both $a$ and $b$.
However, by the construction of $\psi_{\varphi'}$,
this can only be the case if there exists a hyperpath $\varphi'' \subseteq \varphi'$
from $s$ to $t$, which contradicts our assumption.
Thus we know that $\mu$ satisfies $\psi_{\varphi'}$
as well as $x_s$.
This is a contradiction to our previous conclusion
that $\mu$ does not satisfy $x_s$.
Therefore, we can conclude that $\varphi'$ is a hyperpath
from $s$ to $t$.
From this follows that $(\varphi,s,t,k) \in \shorthyperpath{}$.

$(\Leftarrow)$ 
Assume there exists a $k$\hy hyperpath $\varphi' \subseteq \varphi$
from $s$ to $t$.
Now consider $\psi_{\varphi'}$.
Since $\Card{\varphi'} \leq k$, we know that $\Card{\psi_{\varphi'}} \leq k$.
Also, since we know that $(\SBs a,b \SEs, t) \in \varphi'$ for some $a,b \in V$,
we know $\SBs \neg x_a, \neg x_b \SEs \in \psi_{\varphi'}$.
Now assume for an arbitrary assignment $\mu$
that $\mu \models \psi_{\varphi'}$ and $\mu \models x_s$.
By a simple inductive argument,
using the definition of $\psi_{\varphi'}$,
we then get that $\mu \models x_u$
for all $u$ for which there exists a hyperpath
from $s$ to $u$.
In particular, we get $\mu \models x_a \wedge x_b$.
However, since $\SBs \neg x_a, \neg x_b \SEs \in \psi_{\varphi'}$,
we get a contradiction to the fact that $\mu \models \psi_{\varphi'}$.
Thus we can conclude that $\psi_{\varphi'} \models \neg x_s$.
Therefore, $(\psi_{\varphi'},x_s,k) \in \kbackbone{}$.
}
\end{proof}

\paragraph{Krom formulas}
Let us now turn to the case of Krom formulas.
Restricting the problem of finding backbones in arbitrary formulas
to Krom formulas reduces the classical complexity from \coNP{}-completeness
to polynomial time solvability.
Interestingly, unlike the case for Horn formulas, the decrease in complexity
in this case also holds for finding local backbones.\footnote{In a
previous version of this paper \cite{DeHaanKanjSzeider13},
we mistakenly claimed that finding local 
backbones in Krom formulas is \W{1}-hard.}
Finding a minimum-size unsatisfiable subset of a \Krom{} formula
can be done in polynomial time \cite{BureshOppenheimMitchell06}.
This immediately implies that $\smallunsatsubset[\Krom{}]$
is polynomial-time solvable,
and therefore, by Lemma~\ref{lem:2},
$\kbackbone[\Krom{}]$ is also polynomial-time solvable
(and thus also fixed-parameter tractable).

\begin{proposition}
\label{prop:bb-krom}
$\kbackbone[\Krom{}]$ is polynomial-time solvable.
\end{proposition}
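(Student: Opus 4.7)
The plan is to combine the reduction from Lemma~\ref{lem:2} with the polynomial-time algorithm of Buresh-Oppenheim and Mitchell~\cite{BureshOppenheimMitchell06} for finding minimum-size unsatisfiable subsets of Krom formulas. First, I would observe that the class \Krom{} is closed under variable instantiation: reducing a 2-\CNF{} formula with respect to a literal can only shorten clauses, so the result is still a 2-\CNF{} formula. This ensures that the reduction in the proof of Lemma~\ref{lem:2}, which builds $\psi = \varphi_1|_{z_1} \cup \varphi_2|_{\neg z_2}$ from an input $(\varphi,z,k)$, produces a Krom formula $\psi$ whenever the input $\varphi$ is Krom. Moreover, this reduction is computable in polynomial time in $\rCard{\varphi}$, not merely in fpt-time.

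Next, given an instance $(\varphi,x,k)$ of $\kbackbone[\Krom]$, I would construct the corresponding instance $(\psi,k)$ of $\smallunsatsubset[\Krom]$ using this reduction. By the correctness argument of Lemma~\ref{lem:2}, $x$ is a $k$\hy backbone of $\varphi$ if and only if $\psi$ contains an unsatisfiable subset of at most $k$ clauses.

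Then I would invoke the result of Buresh-Oppenheim and Mitchell, which shows that a minimum-size unsatisfiable subset of a Krom formula can be computed in polynomial time. Running this algorithm on $\psi$ yields the size $m$ of a minimum unsatisfiable subset of $\psi$, and we accept the input if and only if $m \leq k$. Composing the two polynomial-time steps gives a polynomial-time decision procedure for $\kbackbone[\Krom]$.

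I do not expect a significant obstacle: the heavy lifting is already done by Lemma~\ref{lem:2} and by the algorithm of Buresh-Oppenheim and Mitchell. The only points requiring verification are that the Lemma~\ref{lem:2} reduction preserves membership in \Krom{} and that it runs in polynomial time rather than merely fpt-time, both of which are immediate from inspecting the construction.
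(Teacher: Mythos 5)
Your proposal is correct and follows essentially the same route as the paper: the paper also derives Proposition~\ref{prop:bb-krom} by noting that the minimum-size unsatisfiable subset problem for Krom formulas is polynomial-time solvable by Buresh-Oppenheim and Mitchell, and then applying the reduction of Lemma~\ref{lem:2}, which preserves membership in \Krom{} since that class is closed under variable instantiation. Your additional remarks that the reduction runs in polynomial time (not just fpt-time) are exactly the implicit checks the paper relies on.
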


\paragraph{Hardness for finding small unsatisfiable subsets}
We would like to point out that all hardness results
for the various restrictions of \kbackbone{}
also hold for \smallunsatsubset{} under the corresponding
restrictions.
This is because the reduction in the proof of Lemma~\ref{lem:2}
works for all classes of formulas that are closed under variable instantiations.
For instance, the reduction in the proof of Lemma~\ref{lem:2}
together with Theorem~\ref{thm:bb-horn-nuhorn}
tells us that $\smallunsatsubset[\Horn{} \cap 3\CNF{}]$ is \W{1}-hard.
This does not follow from the reduction that
Fellows et al.~\cite{FellowsSzeiderWrightson06}
use to prove the \W{1}-hardness of $\smallunsatsubset{}$.
In particular, the following previously unstated
results hold.

\begin{corollary}
$\smallunsatsubset[\CCC]$ is \W{1}-hard for each
$\CCC \in \SBs \DefHorn \cap 3\CNF{}, \NUHorn{} \cap 3\CNF{} \SEs$.
\end{corollary}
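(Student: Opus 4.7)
The plan is to obtain both hardness results by composing the W[1]-hardness reductions from Theorems~\ref{thm:bb-defhorn} and~\ref{thm:bb-horn-nuhorn} with the fpt-reduction of Lemma~\ref{lem:2}. The latter maps an instance $(\varphi,z,k)$ of $\kbackbone[\CCC]$ to the instance $(\psi,k)$ of $\smallunsatsubset$ where $\psi = \varphi_1|_{z_1} \cup \varphi_2|_{\neg z_2}$ is the disjoint union of the two reducts of $\varphi$. To get the desired W[1]-hardness of $\smallunsatsubset[\CCC]$, it suffices to argue that when $\varphi$ is the specific hardness instance produced by Theorem~\ref{thm:bb-defhorn} (respectively Theorem~\ref{thm:bb-horn-nuhorn}), the formula $\psi$ still lies in~$\CCC$.

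The key verification step is checking closure of the relevant formula classes under the two variable instantiations performed by the reduction. Membership in $3\CNF$ is immediate, since $\varphi|_{l}$ only removes clauses and shrinks the remaining ones. For the Horn-related properties, I would exploit the structural information about the query variable~$z$ in each hardness reduction: in Theorem~\ref{thm:bb-defhorn} the variable $t$ has a very restricted occurrence pattern in the source formula, and similarly for $x_s$ in Theorem~\ref{thm:bb-horn-nuhorn}. Inspecting the construction shows that after the two reductions only clauses of the right Horn shape remain, so $\psi$ stays in $\DefHorn \cap 3\CNF$ or $\NUHorn \cap 3\CNF$ respectively.

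The main obstacle is precisely this closure check, since neither class is in general closed under arbitrary variable instantiation: assigning a variable to $0$ can strip the unique positive literal from a definite Horn clause, and reducing a two-literal clause can create a forbidden unit in $\NUHorn$. The plan is to handle each case by leveraging the fact that in the hardness instances the query variable appears in a controlled way (e.g., only positively, and in clauses of a fixed form), so the problematic reductions never occur; where necessary, one applies a small local modification to the source hardness instance — such as padding a clause with a fresh auxiliary variable — to ensure both $\varphi_1|_{z_1}$ and $\varphi_2|_{\neg z_2}$ stay inside~$\CCC$ while preserving equivalence of the decision problems. With this closure verified, the composed fpt-reduction from $\shorthyperpath$ (and ultimately $\MCC$) immediately yields the W[1]-hardness of $\smallunsatsubset[\CCC]$ claimed in the corollary.
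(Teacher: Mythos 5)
Your overall route---composing the $\kbackbone{}$ hardness results of Theorems~\ref{thm:bb-defhorn} and~\ref{thm:bb-horn-nuhorn} with the reduction of Lemma~\ref{lem:2}---is the same as the paper's, but the additional step you insist on, namely showing that the composed instance $\psi = \varphi_1|_{z_1} \cup \varphi_2|_{\neg z_2}$ ``stays in $\DefHorn \cap 3\CNF$ or $\NUHorn \cap 3\CNF$ respectively,'' is a step that cannot be carried out, and no padding with fresh auxiliary variables can repair it. Every definite Horn formula is satisfied by the all-true assignment and every formula in $\NUHorn$ by the all-false assignment, so neither class contains an unsatisfiable formula; consequently $\smallunsatsubset{}$ restricted to either class is trivially a ``no'' problem, and a correct reduction must, on yes-instances, output a $\psi$ that contains an unsatisfiable subset and hence lies outside both classes. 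Concretely, for the instances of Theorem~\ref{thm:bb-defhorn} the reduct $\psi_{\varphi}|_{\neg t}$ strips the unique positive literal $t$ from the goal clause, leaving a purely negative clause (not definite Horn), and for the instances of Theorem~\ref{thm:bb-horn-nuhorn} the reduct $\psi_{\varphi}|_{x_s}$ turns the clauses $\SBs \neg x_s, x_v \SEs$ into unit clauses (not in $\NUHorn$). So your claim that ``inspecting the construction shows only clauses of the right Horn shape remain'' is false, and the obstruction is semantic rather than a syntactic detail that a local modification of the source instance could remove.

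The paper never attempts this closure-into-$\DefHorn$/$\NUHorn$ step. Its argument uses only that the Lemma~\ref{lem:2} reduction preserves membership in classes \emph{closed under variable instantiation}; applied to the hardness instances of Theorems~\ref{thm:bb-defhorn} and~\ref{thm:bb-horn-nuhorn}, the output formulas lie in $\Horn \cap 3\CNF$, which is exactly the class for which the text explicitly derives that $\smallunsatsubset[\Horn{} \cap 3\CNF{}]$ is \W{1}-hard. Read literally, the corollary's target classes cannot be the always-satisfiable classes $\DefHorn \cap 3\CNF$ and $\NUHorn \cap 3\CNF$ themselves; the intended content is that the hardness transferred from those $\kbackbone{}$ results lands in the corresponding instantiation-closed (Horn) classes. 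If you keep your composition but state the conclusion for $\Horn \cap 3\CNF$ (or any instantiation-closed class containing the source instances), your argument coincides with the paper's and goes through; as written, the closure-verification step is a genuine gap.
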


\sloppypar\noindent In fact, these fixed-parameter intractability results for $\smallunsatsubset{}$
give us the following \NP{}-hardness results.

\begin{corollary}
Let $\CCC \in \SBs 3\CNF{} \cap \DefHorn{},
3\CNF{} \cap \NUHorn{} \SEs$.
Given a formula $\varphi \in \CCC$ and an integer $k$,
deciding whether $\varphi$ contains an unsatisfiable subset
of size $\leq k$ is \NP{}-hard.
\end{corollary}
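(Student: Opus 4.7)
The plan is to derive NP-hardness by composing the polynomial-time reductions that underlie the preceding fixed-parameter intractability results, starting from a classically NP-hard source problem.

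First, I would observe that \MCC{} is NP-hard in the classical sense (when $k$ is part of the input), since any ordinary \clique{} instance $(G,k)$ reduces to a \MCC{} instance by taking $k$ disjoint copies of $G$, colouring the $i$-th copy with colour $i$, and keeping only edges between different copies that correspond to edges of $G$; any $k$-clique in the resulting graph must be multi-coloured and certify a $k$-clique in $G$. So it suffices to reduce \MCC{} in polynomial time to $\smallunsatsubset[\CCC]$.

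Next, I would chain together the three reductions already constructed in this section: Lemma~\ref{lem:5} sends an \MCC{} instance $(G,k,c)$ to a \shorthyperpath{} instance of polynomial size with new parameter $k' = k + 2\binom{k}{2}+1$; the reduction of Theorem~\ref{thm:bb-defhorn} (respectively Theorem~\ref{thm:bb-horn-nuhorn}) sends the resulting instance to $\kbackbone[\CCC]$ with a parameter blow-up of at most one; and Lemma~\ref{lem:2} sends this instance to $\smallunsatsubset[\CCC]$ without changing the parameter. Each map is inspected to be computable in time polynomial in the overall input size, so the composition is a genuine polynomial-time many-one reduction from the NP-hard \MCC{} to $\smallunsatsubset[\CCC]$.

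The main obstacle is verifying that the final output instance actually lies inside $\CCC$: Lemma~\ref{lem:2} preserves the class only for $\CCC$ closed under variable instantiation, and neither $3\CNF{}\cap\DefHorn{}$ nor $3\CNF{}\cap\NUHorn{}$ is literally closed under arbitrary instantiation (setting the positive literal of a definite Horn clause to false can leave a purely negative clause, and instantiating a variable in a NUHorn clause of size two produces a unit). I would therefore examine the specific polarities with which the designated backbone variable appears in the formulas produced by Theorems~\ref{thm:bb-defhorn} and~\ref{thm:bb-horn-nuhorn}, and either argue that the only reducts actually constructed by Lemma~\ref{lem:2} stay inside $\CCC$, or locally adapt Lemma~\ref{lem:2} to replace the problematic unit or pure-negative clauses by equivalent constant-size gadgets of the correct clause shape (for instance, by introducing fresh auxiliary variables to pad short clauses up to length two or three in the NUHorn case, while preserving the small unsatisfiable subsets of interest). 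Once this closure issue is handled by a short case analysis, NP-hardness follows immediately from the composition.
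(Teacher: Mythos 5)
You follow exactly the paper's route: its proof consists precisely of observing that the fpt-reductions of Lemma~\ref{lem:5}, Theorems~\ref{thm:bb-defhorn} and~\ref{thm:bb-horn-nuhorn}, and Lemma~\ref{lem:2} are polynomial-time many-one reductions and composing them, starting from the classical \NP{}-hardness of \clique{}; your preliminary reduction from \clique{} to \MCC{} is the standard one, and the parameter bookkeeping is irrelevant for classical hardness, so up to that point your proposal and the paper coincide.

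The genuine problem is the one you flag in your last paragraph, and it cannot be repaired in the way you propose. Every formula in \DefHorn{} is satisfied by the all-true assignment, and every formula in \NUHorn{} by the all-false assignment; hence no formula in either class contains an unsatisfiable subset of any size, so no adaptation of Lemma~\ref{lem:2} can produce yes-instances inside these classes while ``preserving the small unsatisfiable subsets of interest''. Concretely, the formula $\psi$ produced by Lemma~\ref{lem:2} from the instances of Theorem~\ref{thm:bb-defhorn} contains purely negative clauses (the reduct by $\neg t$ deletes the positive literal from definite Horn clauses), and from the instances of Theorem~\ref{thm:bb-horn-nuhorn} it contains unit clauses (the reduct by $x_s$ shortens binary clauses); in both cases the composition lands in $\Horn{} \cap 3\CNF{}$ and necessarily outside the two stated classes. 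What the composed reduction actually establishes is \NP{}-hardness of $\smallunsatsubset[\Horn{} \cap 3\CNF{}]$, restricted to formulas obtained from the stated classes by instantiating a single variable; under the literal reading of the statement the target problem is trivially always ``no'', hence solvable in constant time and not \NP{}-hard unless $\P = \NP$. So your concluding step --- the ``short case analysis'' or padding gadgets meant to force membership in $\CCC$ --- is the step that would fail, and no alternative argument can close it for the classes as written.
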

\longversion{
\begin{proof}
The fpt-reductions given in the proofs of
Lemmas~\ref{lem:2}~and~\ref{lem:5} and Theorems~\ref{thm:bb-defhorn}~and~\ref{thm:bb-horn-nuhorn}
can be used as polynomial many-one reductions
from the \NP{}-hard problem of finding
a clique of certain minimum size in a graph.
\end{proof}
}

\section{Local Backbones of Formulas with Bounded Variable Occurrence}
When considering the restriction of \kbackbone{}
to formulas where variables occur a bounded number of times,
we get a fixed-parameter tractability result at last.
This fixed-parameter tractability result is closely related to
the result that \smallunsatsubset{} is fixed-parameter tractable
for instances restricted to classes of formulas that have locally bounded treewidth
\cite{FellowsSzeiderWrightson06}.
Fellows et al.~used a meta theorem to prove this.
We give a direct algorithm to solve $\smallunsatsubset[\VO_d]$
in fixed-parameter linear time.

Let $(\varphi,k)$ be an instance of $\smallunsatsubset[\VO_d]$.
The following procedure decides whether there exists
an unsatisfiable subset $\varphi' \subseteq \varphi$
of size at most $k$, and computes such a subset if it exists.
We let $\varphi^{\star} = \SB c \in \varphi \SM \Card{c} < k \SE$.
It suffices to consider subsets of $\varphi^{\star}$,
since any unsatisfiable subset $\varphi' \subseteq \varphi$ contains a
minimally unsatisfiable subset $\varphi'' \subseteq \varphi'$,
and by Tarsi's Lemma we know that $\varphi''$ contains only clauses
of size smaller than $k$.

Without loss of generality, we assume that
the incidence graph of $\varphi^{\star}$ is connected.
Otherwise, we can solve the problem by running the algorithm
on each of the connected components.
We guess a clause $c \in \varphi^{\star}$,
we let $F_1 := \SBs c \SEs$,
and we let all variables be unmarked initially.
We compute $F_{i+1}$ for $1 \leq i \leq k$ by means of the following (non-deterministic) rule:
\begin{enumerate}
  \shortversion{\vspace{-1mm}}
  \item take an unmarked variable $z \in \Var{F_i}$;
  \item guess a non-empty subset $F'_z \subseteq F_z$
  for $F_z = \SB c \in \varphi^{\star} \SM z \in \Var{c} \SE$;
  \item let $F_{i+1} := F_i \cup F'_z$;
  \item mark $z$.
  \shortversion{\vspace{-1mm}}
\end{enumerate}
If at any point all variables in $F_i$ are marked,
we stop computing $F_{i+1}$.
For any $F_i$, if $\Card{F_i} > k$ we fail.
For each $F_i$, we check whether $F_i$ is unsatisfiable.
If it is unsatisfiable, we return with $\varphi' = F_i$.
If it is satisfiable and if it contains no unmarked variables, we fail.
\shortversion{
}
\longversion{

}
It is easy to see that this algorithm is sound.
If some $\varphi' \subseteq \varphi^{\star}$ is returned,
then~$\varphi'$ is unsatisfiable and $\Card{\varphi'} \leq k$.
In order to see that the algorithm is complete,
assume that there exists some unsatisfiable $\varphi' \subseteq \varphi^{\star}$
with $\Card{\varphi'} \leq k$.
Then, since we know that the incidence graph of $F'$ is connected,
we know that $F'$ can be constructed as one of the~$F_i$ in the algorithm.

To see that this algorithm witnesses fixed-parameter linearity,
we bound its running time.
We have to execute the search process at most once for each clause
of $\varphi^{\star}$.
At each point in the execution of the algorithm,
$F_i$ contains at most $k$ variables.
Therefore, there are at most $k$ choices to take an unmarked variable $z$.
Since each variable occurs in at most $d$ clauses,
for each $F_z$ used in the rule
we know $\Card{F_z} \leq d$.
Thus, there are at most $2^{d}$ possible guesses for $F'_z$
in each execution of the rule.
Since we iterate the rule at most $k$ times,
we consider at most $(k 2^{d})^{k}$ sets $F'$,
each of size $O(k^2)$.
Thus each (un)satisfiability check can be done in $O(2^k)$ time.
Therefore, the total running time of the algorithm is $O(k^{k} 2^{d k} n)$,
for $n$ the size of the instance.

This algorithm also gives us a direct algorithm that shows that
$\kbackbone[\VO_d]$ is fixed-parameter linear.

\begin{theorem}
\label{thm:bb-vo}
$\kbackbone[\VO_d]$ is fixed-parameter linear.
\end{theorem}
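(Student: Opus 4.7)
The plan is to obtain the result by reducing $\kbackbone[\VO_d]$ to $\smallunsatsubset[\VO_d]$ via the construction from Lemma~\ref{lem:2}, and then invoking the fixed-parameter linear algorithm for $\smallunsatsubset[\VO_d]$ that was just described. So the proof amounts to verifying that that reduction is itself linear-time and preserves membership in $\VO_d$.

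First I would recall the reduction: given an instance $(\varphi,z,k)$ of $\kbackbone$, we produce the instance $(\psi,k)$ of $\smallunsatsubset$ where $\psi = \varphi_1|_{z_1} \cup \varphi_2|_{\neg z_2}$ is the disjoint union of two reducts of (renamed, disjoint) copies of $\varphi$. The parameter $k$ is preserved, and $\psi$ can plainly be computed in linear time in $\rCard{\varphi}$.

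Next, I would argue that $\psi \in \VO_d$ whenever $\varphi \in \VO_d$. The two copies $\varphi_1$ and $\varphi_2$ use disjoint variable sets, so any variable of $\psi$ occurs only in one of the two disjoint reducts. Within each copy $\varphi_i$, every variable occurs at most $d$ times by assumption; taking a reduct with respect to a literal only deletes clauses or deletes literals from clauses, hence cannot increase any variable's occurrence count. Therefore every variable of $\psi$ still occurs at most $d$ times, so $\psi \in \VO_d$.

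Combining the two observations, I would conclude that the algorithm for $\smallunsatsubset[\VO_d]$ given above, when applied to $(\psi,k)$, decides $(\varphi,z,k) \in \kbackbone[\VO_d]$ in time $O(k^{k}\,2^{dk}\,n)$, which is fixed-parameter linear in the size $n$ of the input. There is no real obstacle here; the only point to check carefully is the $\VO_d$-preservation of the reduction, which is immediate once one notes both the disjointness of the variable renaming and the monotonicity of reducts in variable occurrence counts.
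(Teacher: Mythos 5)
Your proposal is correct and follows essentially the same route as the paper: the paper's proof also just applies the reduction from Lemma~\ref{lem:2} to the fixed-parameter linear algorithm for $\smallunsatsubset[\VO_d]$ given above. You merely spell out the two details the paper leaves implicit (that the reduction runs in linear time with the same parameter, and that it preserves membership in $\VO_d$ since the copies are variable-disjoint and taking reducts never increases occurrence counts), which is exactly the right thing to check.
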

\begin{proof}
The result follows directly by using the reduction in the proof of Lemma~\ref{lem:2}
in combination with the above algorithm.
\end{proof}

\section{Iterative Local Backbones}

We now consider the (parameterized) complexity
of finding iterative local backbones.
It is easy to see that \iterativebackbone{} is in \paraNP{}.
This is witnessed by a straightforward nondeterministic fpt-algorithm,
that guesses a sequence of $n$ witnesses $(\varphi_i,l_i)$
with $\Card{\varphi_i} \leq k$, and
that verifies whether $\varphi_i \subseteq \varphi|_{\{l_1,\dotsc,l_{i-1}\}}$
and whether $\varphi_i \models l_i$.

Some of the results we obtained for the problem of finding local backbones
can be carried over.
\longversion{
In all settings that yield fixed-parameter tractability for \kbackbone{}
we obtain that \iterativebackbone{} is fixed-parameter tractable as well.
}

\begin{theorem}
\label{thm:ibb-vo}
Let $\CCC$ be a class of formulas
such that $\kbackbone[\CCC]$ is fixed-parameter tractable
and $\CCC$ is closed under variable instantiation.
Then $\iterativebackbone[\CCC]$ is fixed-parameter tractable.
\end{theorem}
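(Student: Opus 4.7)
The plan is to give a direct fpt-algorithm for $\iterativebackbone[\CCC]$ that repeatedly instantiates $k$-backbones of the current formula until $x$ itself becomes a $k$-backbone, or no $k$-backbone remains. Given an instance $(\varphi,x,k)$, maintain a current formula $\psi$, initialized to $\varphi$. In each iteration, first invoke the fpt-algorithm for $\kbackbone[\CCC]$ on $(\psi,x,k)$ -- which is applicable since $\CCC$ is closed under variable instantiation, hence $\psi \in \CCC$ -- and answer YES if $x$ is a $k$-backbone of $\psi$. Otherwise, call the same fpt-algorithm on $(\psi,y,k)$ for each $y \in \Var{\psi}$ to search for some other $k$-backbone; if none is found, answer NO. Otherwise, pick such a $y$, determine the forced literal $l \in \SBs y,\neg y \SEs$ (using a constructive variant of the fpt-algorithm, which returns a witnessing subset $\varphi' \subseteq \psi$ of size at most $k$; the entailed literal can then be read off in time $O(2^k)$), set $\psi := \psi|_{l}$, and iterate.

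Soundness in one direction is immediate from the definition of iterative $k$-backbone: an answer of YES means that $x$ is a $k$-backbone of some formula obtained from $\varphi$ by successively instantiating $k$-backbones, and so $x$ is an iterative $k$-backbone of $\varphi$. The converse, and the main obstacle, is a confluence property justifying that the greedy choice of which $k$-backbone to instantiate first is never wrong: if $y$ is a $k$-backbone of $\psi$ with forced literal $l$, then every iterative $k$-backbone of $\psi$ other than $y$ is an iterative $k$-backbone of $\psi|_{l}$, and conversely. I would establish this by induction on a derivation witnessing $x$ to be an iterative $k$-backbone of $\psi$, using the technical observation that whenever $\varphi' \subseteq \psi$ with $\Card{\varphi'} \leq k$ satisfies $\varphi' \models l'$ for a literal $l'$ on a variable different from $y$, we also have $\varphi'|_{l} \models l'$. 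This is verified by extending any satisfying assignment of $\varphi'|_{l}$ to one of $\varphi'$ by setting $y$ according to $l$; if no consistent extension exists then $\varphi'|_{l}$ is in fact unsatisfiable and therefore trivially entails $l'$.

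The running time is fpt: each iteration strictly reduces $\Card{\Var{\psi}}$, so the loop executes at most $\Card{\Var{\varphi}}$ times, and within each iteration we make $O(\Card{\Var{\varphi}})$ calls to the fpt-algorithm for $\kbackbone[\CCC]$, each running in time $f(k) \cdot \rCard{\varphi}^c$ for some computable function $f$ and constant $c$. The overall running time is thus $f(k) \cdot \rCard{\varphi}^{O(1)}$, establishing fixed-parameter tractability of $\iterativebackbone[\CCC]$.
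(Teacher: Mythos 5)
Your proposal follows essentially the same strategy as the paper's proof: repeatedly use the fpt subroutine to find literals that are forced by subsets of at most $k$ clauses, instantiate them, and iterate until $x$ is caught or a fixpoint is reached. The paper's Algorithm~\ref{alg:1} collects in each round \emph{all} literals $l$ with $(\psi|_{\overline{l}},k)\in\smallunsatsubset$ and instantiates them simultaneously, whereas you instantiate one forced literal per round; that difference is immaterial. What you add is the correctness argument the paper leaves implicit: the exchange/confluence property, and your key technical claim -- that a subset $\varphi'$ of at most $k$ clauses with $\varphi'\models l'$ still entails $l'$ after instantiating any literal on a different variable -- is correct and is exactly what makes the greedy (or simultaneous) choice safe.

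The one step your hypotheses do not cover is the appeal to a \emph{constructive} variant of the $\kbackbone[\CCC]$ algorithm that returns a witnessing subset: the theorem only assumes the decision problem is fixed-parameter tractable, and closure under variable instantiation does not provide the self-reducibility (e.g., closure under clause deletion) needed to extract such a witness. The paper avoids this by determining the forced polarity directly: a $\leq k$-clause subset of $\psi$ entails $l$ if and only if $\psi|_{\overline{l}}$ (which stays in $\CCC$ by closure) has an unsatisfiable subset of at most $k$ clauses, so a $\smallunsatsubset[\CCC]$ test on reducts is the subroutine actually used -- though, to be fair, the paper is itself terse about deriving that subroutine from the stated hypothesis (it cites the reduction of Lemma~\ref{lem:2}, which goes from backbones to unsatisfiable subsets, and in the paper's applications the small-unsatisfiable-subset algorithm is what is really available). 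Recasting your polarity detection in those terms would close the only real gap. Finally, note that when the current formula is unsatisfiable the condition ``$\varphi\models l$'' in the definition of iterative $k$-backbones is vacuous and permits either polarity, so your confluence claim should be read with respect to the locally certified polarity; this corner case is treated in exactly the same way by Algorithm~\ref{alg:1}, so it does not distinguish your argument from the paper's.
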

\begin{proof}
We give an algorithm to solve $\iterativebackbone[\CCC]$
that calls a subroutine to solve instances of $\smallunsatsubset[\CCC]$.
This algorithm is given in the form of pseudo-code
as Algorithm~\ref{alg:1}.
By the fact that $\CCC$ is closed
under variable instantiations we are able
to apply the reduction in the proof of Lemma~\ref{lem:2}.
Thus, we can assume that the question of whether some $\varphi \in \CCC$
contains an unsatisfiable subset of size at most $k$
can be solved in $f(k)\rCard{\varphi}^c$ time, for some computable function $f$
and some constant $c$.
Then, the entire algorithm runs in $O(f(k)\rCard{\varphi}^{c+2})$ time.
This proves the claim.
\end{proof}

\begin{figure}
  \shortversion{\vspace{-10pt}}
  \begin{algorithm}[H]
    \SetAlgoLined
    \SetKwInOut{Input}{input}\SetKwInOut{Output}{output}
    \SetKwData{Conseq}{conseq}
    \SetKw{Return}{return}
    \Input{an instance $(\varphi,x,k)$ of \iterativebackbone{}}
    \Output{yes iff $(\varphi,x,k) \in \iterativebackbone{}$}
    \BlankLine
    $\psi \leftarrow \varphi$
    \longversion{\;}\shortversion{; \quad}
    \Conseq $\leftarrow \emptyset$\; 
    \For{$i \leftarrow 1$ \KwTo $\Card{\Lit{\varphi}}$}{
      \ForEach{literal $l \in \Lit{\psi}$}{
        \If{$(\psi|_{\overline{l}},k) \in \smallunsatsubset{}$}{
          $\Conseq \leftarrow \Conseq \cup \{l\}$\;
        }
      }
      $\psi \leftarrow \psi|_{\Conseq}$\;
    }
    \Return{$\{x,\neg x\} \cap \Conseq \neq \emptyset$}
    \caption{Deciding \iterativebackbone{} with a \smallunsatsubset{} oracle.}
    \label{alg:1}
  \end{algorithm}
  \shortversion{\vspace{-15pt}}
\end{figure}

\longversion{
\noindent Another result that carries over from the case of finding local backbones
is the fixed-parameter intractability of finding iterative local backbones
in Horn formulas without unit clauses.
}

\begin{corollary}
\label{cor:ibb-horn}
$\iterativebackbone[\NUHorn{} \cap 3\CNF{}]$ is \W{1}-hard.
\end{corollary}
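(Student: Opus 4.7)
The plan is to show that the reduction in the proof of Theorem~\ref{thm:bb-horn-nuhorn} already establishes the stronger W[1]-hardness of $\iterativebackbone[\NUHorn \cap 3\CNF]$ with no change to the construction, just an additional structural observation. Given an instance $(\varphi,s,t,k)$ of \shorthyperpath{}, build $(\psi_\varphi,x_s,k)$ exactly as before, with $\psi_\varphi \in \NUHorn \cap 3\CNF$. It suffices to argue that $x_s$ is an iterative $k$-backbone of $\psi_\varphi$ if and only if $x_s$ is already a $k$-backbone of $\psi_\varphi$; by Theorem~\ref{thm:bb-horn-nuhorn} this is then equivalent to the existence of a $k$-hyperpath from $s$ to $t$, and W[1]-hardness follows via Lemma~\ref{lem:5}.

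The structural key is that $x_s$ is the only variable of $\psi_\varphi$ that can serve as a $k'$-backbone, for any $k'$. Since $\psi_\varphi$ is Horn and unit-free, the all-false assignment is a model, so every subformula $\psi' \subseteq \psi_\varphi$ can only entail negative literals, and for Horn $\psi'$ the condition $\psi' \models \neg y$ is equivalent to unit propagation on $\psi' \cup \{y\}$ deriving the empty clause. I then inspect the clauses produced by the reduction and verify that the only 2-literal implicational clauses of $\psi_\varphi$ are the edges $\{\neg x_s, x_v\}$ for $v \in V$, together with the single chain-head $\{\neg x_{p_1}, x_{v_1}\}$; every other implicational clause is 3-literal and needs two already-derived antecedents to fire. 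Consequently, unit propagation starting from $x_y = 1$ for any $y \neq s$ derives at most one further positive literal and then stalls, never reaching the unique negative-only constraint clause, so no such $y$ is a $k'$-backbone.

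Given this uniqueness, clause (ii) of the iterative definition collapses for target $x_s$. The only candidate first-step $k$-backbone $y$ is $y = x_s$ itself, forcing $l = \neg x_s$; but the reduct $\psi_\varphi|_{\neg x_s}$ no longer contains $x_s$ as a variable and is still in $\NUHorn$, hence satisfied by the all-false assignment, and re-running the propagation analysis on the reduct shows that no variable of the reduct is a $k$-backbone either. The recursion therefore terminates without ever witnessing $x_s$ as an iterative $k$-backbone via clause (ii). Hence $x_s$ is an iterative $k$-backbone of $\psi_\varphi$ exactly when it is a $k$-backbone, and the reduction of Theorem~\ref{thm:bb-horn-nuhorn} carries over verbatim.

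The main obstacle is the clause-by-clause verification that unit propagation from $x_y = 1$ with $y \neq s$ really does get stuck early, both in $\psi_\varphi$ and in its reduct $\psi_\varphi|_{\neg x_s}$. One has to track the 3-CNF chain gadget of Lemma~\ref{lem:5} carefully and confirm it contributes at most the one extra 2-literal implicational clause $\{\neg x_{p_1}, x_{v_1}\}$; everything else is routine bookkeeping on top of Theorem~\ref{thm:bb-horn-nuhorn}.
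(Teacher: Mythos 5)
Your proposal is correct and takes essentially the same route as the paper: it reuses the composite reduction of Lemma~\ref{lem:5} and Theorem~\ref{thm:bb-horn-nuhorn} unchanged and argues that on its output instances the iterative mechanism adds nothing, so the \W{1}-hardness transfers verbatim. The paper justifies this vacuity by observing that any witnessing subset must have size exactly $k'$ (so in no-instances no variable is a $k$\hy backbone and the iteration can never start), whereas you justify it by a unit-propagation analysis showing that $x_s$ is the only variable of the constructed formula that can ever be a backbone --- a claim that indeed holds for the MCC-derived instances (though not for arbitrary \shorthyperpath{} inputs), which is all the hardness argument requires.
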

\begin{proof}
Observe that the proofs of Lemma~\ref{lem:5} and Theorem~\ref{thm:bb-horn-nuhorn}
imply that it is already \W{1}-hard to determine whether a 
formula $\varphi \in \NUHorn{} \cap 3\CNF{}$ has a subset $\varphi' \subseteq \varphi$
of size exactly $k$ witnessing that any $x \in \Var{\varphi}$ is a $k$\hy backbone.
From this, it immediately follows that determining whether
$(\varphi,x,k) \in \iterativebackbone{}$ is \W{1}-hard as well.
\end{proof}

\noindent We identify several tractable cases for
\iterativebackbone{}.
The problem of finding iterative local backbones
in definite Horn formulas is polynomial time solvable.
Interestingly, for this restriction the problem
of finding (non-iterative) local backbones remains \W{1}-hard.
Similarly, finding iterative local backbones in Krom formulas
is solvable in polynomial time as well.
This latter result already follows by Proposition~\ref{prop:bb-krom}.
We will however give an alternative (and simpler) algorithm
to find iterative local backbones in Krom formulas.
In order to show that finding iterative local backbones in
definite Horn formulas is tractable, we will use the following observation.

\begin{observation}
\label{obs:1}
Let $\varphi$ be any propositional formula,
let $l$ be any literal such that there exists a $\varphi' \subseteq \varphi$
with $\Card{\varphi'} \leq k$ and $\varphi' \models l$,
and let $\psi = \varphi|_{l}$.
Then $x \in \Var{\psi}$ is an iterative $k$\hy backbone of $\psi$
if and only if it is an iterative $k$\hy backbone of $\varphi$.
\end{observation}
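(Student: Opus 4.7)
The plan is to prove the two implications separately. The forward direction ($\Rightarrow$) is almost immediate from clause (ii) of the recursive definition of iterative $k$\hy backbone: setting $y = \Var{l}$ as the distinguished variable, the hypothesis directly supplies a subset $\varphi' \subseteq \varphi$ with $\Card{\varphi'} \leq k$ and $\varphi' \models l$, so $y$ is a $k$\hy backbone of $\varphi$ and $\varphi \models l$ (since $\varphi' \subseteq \varphi$); together with the assumption that $x$ is an iterative $k$\hy backbone of $\varphi|_l = \psi$, clause (ii) yields that $x$ is an iterative $k$\hy backbone of $\varphi$.

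For the reverse direction ($\Leftarrow$), I would induct on the depth of the recursive derivation of $x$ as an iterative $k$\hy backbone of $\varphi$. In the base case $x$ is a $k$\hy backbone of $\varphi$ via some witness $\varphi'' \subseteq \varphi$ with $\Card{\varphi''} \leq k$ and, without loss of generality, $\varphi'' \models x$. The natural candidate witness in $\psi$ is $\varphi''|_l$, which still has at most $k$ clauses. A short model-extension argument shows $\varphi''|_l \models x$: any assignment $\tau'$ satisfying $\varphi''|_l$ can be extended by setting $l$ true, the extension satisfies $\varphi''$ clause-by-clause, hence makes $x$ true, and since $x \in \Var{\psi}$ is distinct from $\Var{l}$, the value of $x$ is inherited by $\tau'$.

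In the inductive step $x$ is an iterative $k$\hy backbone of $\varphi$ via some $y \in \Var{\varphi}$ that is a $k$\hy backbone of $\varphi$ with enforced literal $l' \in \SBs y,\neg y \SEs$, and $x$ is an iterative $k$\hy backbone of $\varphi|_{l'}$ at smaller derivation depth. If $\Var{l'} = \Var{l}$, then (assuming $\varphi$ satisfiable) $l' = l$, so $\varphi|_{l'} = \psi$ and the conclusion is immediate; the unsatisfiable case is trivial since then every variable is a $1$\hy backbone of both $\varphi$ and $\psi$. Otherwise, the base-case argument applied to $y$ shows that $y$ is a $k$\hy backbone of $\psi$, and a symmetric model-extension argument gives $\psi \models l'$. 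Using that reducts with respect to literals on different variables commute, i.e., $\psi|_{l'} = (\varphi|_{l'})|_l$, I apply the inductive hypothesis to $\varphi|_{l'}$ (in which $l$ is still enforced by the $k$\hy backbone witness $\varphi'|_{l'}$) to transfer ``$x$ is an iterative $k$\hy backbone of $\varphi|_{l'}$'' to ``$x$ is an iterative $k$\hy backbone of $(\varphi|_{l'})|_l = \psi|_{l'}$'', then conclude via clause (ii) of the definition that $x$ is an iterative $k$\hy backbone of $\psi$.

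The main obstacle is a technical edge case in the inductive step: clause (ii) of the definition requires $y \in \Var{\psi}$, but the reduction $\varphi \mapsto \varphi|_l$ may delete $y$ entirely if every clause of $\varphi$ containing $y$ or $\neg y$ also contains $l$. In that case $\varphi''|_l$ does not mention $y$, so the fact that $\varphi''|_l \models y$ forces $\varphi''|_l$ to be unsatisfiable; consequently $\psi$ contains an unsatisfiable subset of size at most $k$, which makes every variable of $\Var{\psi}$ a $k$\hy backbone of $\psi$, and the claim collapses to the base case. Handling this boundary together with the parallel case $\varphi$ unsatisfiable is the only nontrivial book-keeping in the argument.
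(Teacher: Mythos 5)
The paper states Observation~\ref{obs:1} without proof, so there is no official argument to compare against; judged on its own, your strategy (forward direction by one application of clause~(ii); reverse direction by induction on the derivation, replacing each witness $\chi$ by $\chi|_{l}$ and using that reducts on distinct variables commute) is the natural one, and your forward direction, base case, and treatment of a vanishing $y$ are fine. There are, however, two genuine gaps. First, your disposal of the unsatisfiable case is wrong: unsatisfiability of $\varphi$ does not yield an unsatisfiable subset of size at most $k$, so it does not make every variable a $1$\hy backbone (or even a $k$\hy backbone). Worse, under the paper's literal definition this case is not merely nontrivial but is exactly where the statement can fail: take $k=1$, $\varphi=\{\{a\},\{x,u\},\{x,\neg u\},\{\neg x,u\},\{\neg x,\neg u\}\}$ and $l=a$. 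Since $\varphi$ is unsatisfiable, $\varphi\models\neg a$ holds vacuously, $a$ is a $1$\hy backbone via $\{a\}$, and $\varphi|_{\neg a}$ contains the empty clause, so every variable, in particular $x$, is an iterative $1$\hy backbone of $\varphi$; yet $\psi=\varphi|_{a}$ consists of four binary clauses none of which entails a literal, so $\psi$ has no $1$\hy backbones and hence no iterative $1$\hy backbones. So the unsatisfiable case must be excluded by hypothesis (harmless for the paper, which uses the observation only for definite Horn formulas, which are always satisfiable); it cannot be declared trivial.

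Second, in the main inductive step you invoke the induction hypothesis for $\varphi|_{l'}$ and $l$, but the statement you are inducting on carries the side condition that $x$ belongs to the variables of the corresponding reduct, i.e.\ you would need $x\in\Var{(\varphi|_{l'})|_{l}}=\Var{\psi|_{l'}}$, and you never verify this; it can fail a priori (every clause of $\psi$ containing $x$ or $\neg x$ might contain $l'$), even though you do notice the analogous issue for $y$. The hole is fixable in two ways: either strengthen the induction statement so that only $x\neq\Var{l}$ (rather than $x\in\Var{\psi}$) is required --- the transferred witness $\chi|_{l}$ still entails $x$ or $\neg x$ whenever $x\neq\Var{l}$, so membership of $x$ in the reduct is never actually needed inside the induction --- or show that for satisfiable $\varphi$ the bad configuration cannot arise: if every clause of $\psi$ containing $x$ or $\neg x$ also contained $l'$, then since $\psi\models l'$ one could flip $x$ in any model of $\psi$, contradicting that $\psi$ forces $x$, which it must because models of $\psi$ extend to models of $\varphi$ and iterative $k$\hy backbones of satisfiable formulas are genuine backbones. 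As written, though, the inductive step does not go through without one of these repairs.
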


\begin{theorem}
\label{thm:ibb-defhorn}
$\iterativebackbone[\DefHorn]$ is polynomial-time solvable.
\end{theorem}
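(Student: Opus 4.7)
The plan is to show that the parameter $k$ plays essentially no role here, because on definite Horn formulas the iterative $k$-backbones coincide with the full set of (ordinary) backbones, which is computable in polynomial time by unit propagation.

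First I would use Lemma~\ref{lem:nudefhorn}, or rather its underlying argument: a definite Horn formula $\varphi$ is satisfied by the all-true assignment, so no variable $x$ can satisfy $\varphi\models\neg x$. Hence every backbone of $\varphi$ is a positive literal, i.e.\ the set of backbones of $\varphi$ equals $\SB x\in\Var{\varphi}\SM\varphi\models x\SE$. By the classical completeness of unit propagation for Horn entailment, this set equals the forward-chaining closure of the variables occurring in unit clauses of $\varphi$: starting from the set of positive unit clauses, repeatedly add $y$ whenever some clause $\SBs\neg x_1,\dotsc,\neg x_n,y\SEs\in\varphi$ has all $x_i$ already derived. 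This closure is computable in polynomial time by standard linear-time algorithms for Horn satisfiability.

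Next I would observe that this forward-chaining process is precisely iterative application of $1$\hy backbones: a $1$\hy backbone of a definite Horn formula is exactly a variable occurring in a unit clause, and reducing with that variable performs one step of unit propagation (while keeping the formula in $\DefHorn$). Hence the set of iterative $1$\hy backbones of $\varphi$ equals the set of backbones of $\varphi$.

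Finally I would sandwich the iterative $k$\hy backbones: on one side, every iterative $1$\hy backbone is an iterative $k$\hy backbone for $k\geq 1$, since any $1$\hy backbone is a $k$\hy backbone; on the other side, every iterative $k$\hy backbone of $\varphi$ is a backbone of $\varphi$, which follows by induction on the definition and the fact that if $\varphi\models l$ and $\varphi|_l\models x$ (resp.\ $\neg x$) then $\varphi\models x$ (resp.\ $\neg x$). Combining the two inclusions with the equality established above yields
\[
\mtext{iterative $1$\hy backbones of $\varphi$}\;=\;\mtext{iterative $k$\hy backbones of $\varphi$}\;=\;\mtext{backbones of $\varphi$},
\]
which is computed in polynomial time by unit propagation, independently of $k$. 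The only subtle point—and what makes the contrast with Theorem~\ref{thm:bb-defhorn} intuitive—is that a single backbone may require a large subset of $\varphi$ to witness it non-iteratively (long proof chains), but under iteration those chains decompose into unit-propagation steps of constant size.
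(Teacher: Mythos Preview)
Your argument is correct and follows essentially the same route as the paper: both show that for a definite Horn formula the iterative $k$\hy backbones coincide with the set $\SB x \SM \varphi \models x \SE$ (the forward-chaining closure of the unit clauses), which is computable in polynomial time. Your sandwich via iterative $1$\hy backbones is a slightly tidier packaging of the same inductive step the paper carries out directly for arbitrary $k$ using Observation~\ref{obs:1}.
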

\begin{proof}
We show that for any definite Horn formula $\varphi$
and any $k \geq 1$
the set of iterative $k$\hy backbones of $\varphi$
coincides with the set of variables $x \in \Var{\varphi}$
such that $\varphi \models x$.
The claim then follows, since the
entailment relation $\models$ can be decided
in linear time for definite Horn formulas \cite{DownlingGallier84}.

Fix an arbitrary integer $k \geq 1$
and an arbitrary definite Horn formula $\varphi$.
Since definite Horn formulas cannot entail negative literals,
we know that each iterative $k$\hy backbone $x$
of $\varphi$ is also a semantic consequence of $\varphi$.
Now, let $x \in \Var{\varphi}$ be an arbitrary atom
and assume that $\varphi \models x$.
So there exist variables $x_1,\dotsc,x_m \in \Var{\varphi}$
such that $x_m = x$ and for each $x_i$ we have either
(i) $\{x_i\} \in \varphi$ or (ii)~$\{ \neg x_{i_1},\dotsc,\neg x_{i_l},x_{i} \} \in \varphi$
for some $i_1 < \dotsm < i_l < i$.
We prove by induction on $m$
that each $x_i$ is an iterative $k$\hy backbone.
Take an arbitrary $x_i$.
By the induction hypothesis, we can assume that every $x_j$
for $j < i$ is an iterative $k$\hy backbone of $\varphi$.
We proceed by case distinction for the justification of $x_i$ in the sequence.
In case (i), we know that $\{ x_i \} \in \varphi$.
Therefore, it directly follows that $x_i$ is a $k$\hy backbone of $\varphi$,
and thus is an iterative $k$\hy backbone too.
In case (ii), we know that $\{ \neg x_{i_1},\dotsc,\neg x_{i_l},x_{i} \} \in \varphi$
for some $i_1 < \dotsm < i_l < i$.
By the induction hypothesis, we know that each $x_{i_j}$ is
an iterative $k$\hy backbone of $\varphi$.
By assumption, we have that $\varphi \models x_{i_j}$ for each $x_{i_j}$.
By Observation \ref{obs:1}, we get that
$x_i$ is an iterative $k$\hy backbone of $\varphi$
if and only if it is an iterative $k$\hy backbone of $\varphi_{\{x_{i_1},\dotsc,x_{i_l}\}}$.
It holds that $\{ x_i \} \in \varphi_{\{x_{i_1},\dotsc,x_{i_l}\}}$.
Thus, $x_i$ is an iterative $k$\hy backbone of $\varphi$.
\end{proof}

\begin{theorem}
\label{thm:ibb-krom}
$\iterativebackbone[\Krom]$ is polynomial-time solvable.
\end{theorem}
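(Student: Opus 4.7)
The plan is to give a direct iterative algorithm operating on the implication graph $\impl{\varphi}$, bypassing Proposition~\ref{prop:bb-krom}. The central observation is the following characterization: for any Krom formula $\psi$ and literal $l \in \Lit{\psi}$, $l$ is a $k$\hy backbone of $\psi$ if and only if there is a walk in $\impl{\psi}$ from $\ol{l}$ to $l$ that uses at most $k$ clauses of $\psi$, in the sense of the ``uses''/``doubly uses'' terminology introduced in Section~\ref{sec:logic}. The $(\Rightarrow)$ direction follows from the implication-graph semantics of 2-SAT: if $\psi' \subseteq \psi$ has $|\psi'| \leq k$ and $\psi' \models l$, then either $\psi'$ is unsatisfiable and $\impl{\psi'}$ contains a cycle through some complementary pair that can be rerouted into a walk from $\ol{l}$ to $l$, or $\impl{\psi'}$ directly contains a path from $\ol{l}$ to $l$; in both cases every clause the walk uses lies in $\psi'$. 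The $(\Leftarrow)$ direction is immediate, since the set of clauses used by such a walk is a subset of $\psi$ of size at most $k$ that entails $l$.

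Given this characterization, the algorithm proceeds as follows. Initialize $\psi := \varphi$ and $L := \emptyset$. Repeat: build $\impl{\psi}$ and compute the set $L'$ of all literals $l$ for which $\impl{\psi}$ contains a walk from $\ol{l}$ to $l$ using at most $k$ clauses. If $L' = \emptyset$, terminate; otherwise set $L := L \cup L'$ and $\psi := \psi|_{L'}$, and continue. Return ``yes'' iff $\{x,\neg x\} \cap L \neq \emptyset$. Correctness follows from the characterization above together with the inductive definition of iterative $k$\hy backbones: each successful outer iteration corresponds to one step of instantiating $k$\hy backbones and simplifying, and all such propagations are captured because $L'$ collects every literal witnessed by a short enough implication-graph walk.

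For the per-iteration subroutine, the minimum number of clauses used by a walk from $\ol{l}$ to $l$ in $\impl{\psi}$ can be computed in polynomial time for every literal $l$ via a shortest-path-style search that tracks distinct clauses rather than edges (each clause contributing a pair of reverse-contrapositive edges). The total number of outer iterations is bounded by $|\Var{\varphi}|$ because each non-terminal iteration assigns at least one previously free variable, so the overall running time is polynomial in $\rCard{\varphi}$.

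The main obstacle is the per-iteration $k$\hy backbone check: since a walk may doubly use a clause, counting distinct clauses differs from counting edges, and the shortest-path variant must be designed to count correctly. This is precisely the reason the paper introduces the ``uses'' and ``doubly uses'' terminology beforehand. Once this clause-counting shortest path is in place, the iterative framework and its polynomial-time termination follow immediately, and no appeal to the minimum-unsatisfiable-subset machinery underlying Proposition~\ref{prop:bb-krom} is needed.
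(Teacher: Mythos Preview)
Your iterative framework is sound, but the claimed per-iteration subroutine is exactly the step you have not justified, and it is the whole difficulty. You assert that ``the minimum number of clauses used by a walk from $\ol{l}$ to $l$'' can be computed by a ``shortest-path-style search that tracks distinct clauses rather than edges.'' That problem is an instance of the minimum-label path problem (each clause is a label appearing on two edges), which is \NP-hard in general; the contrapositive structure of $\impl{\psi}$ does make it polynomial, but establishing that is precisely the content of the Buresh-Oppenheim--Mitchell result behind Proposition~\ref{prop:bb-krom} (via Lemma~\ref{lem:2}). So your algorithm does not bypass Proposition~\ref{prop:bb-krom}; it silently relies on it. If you are willing to invoke Proposition~\ref{prop:bb-krom}, then your argument collapses to the generic route Proposition~\ref{prop:bb-krom} $+$ Theorem~\ref{thm:ibb-vo}, which the paper already notes just before Theorem~\ref{thm:ibb-krom}. (A secondary gap: your characterization of $k$-backbones via walks fails when the witnessing subset $\psi'$ is unsatisfiable and does not mention $x$; then $\psi'\models x$ vacuously but there is no walk through $x$ in $\impl{\psi}$ using only those clauses.)

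The paper's proof is genuinely different and avoids distinct-clause counting altogether. Its per-iteration rule is simply ``does $\impl{\psi}$ contain a path from $l$ to $\ol{l}$ of \emph{edge-length} at most $k$?'', which is ordinary BFS. This rule is sound (such a path uses at most $k$ clauses) but not complete for $k$-backbones in a single pass, since a witnessing path may doubly use clauses and hence have edge-length up to $2k$. The correspondence lemma then shows, by induction on the number $m$ of doubly used clauses, that any $\ol{l}\!\to\!l$ path using at most $k$ clauses decomposes into a sequence of $m{+}1$ literals each witnessed, in the successively reduced formula, by a path with no double use (hence edge-length $\le k$). Thus iterating the cheap edge-length rule already yields all iterative $k$-backbones, and no minimum-unsatisfiable-subset machinery is needed.
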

\begin{proof}
We show that the iterative $k$\hy backbones of a Krom formula $\varphi$
coincide with those backbones of $\varphi$ that can be identified
by iterated application of the following rule:
if the implication graph of $\varphi$ contains a path from a literal $l \in \SBs x,\neg x \SEs$
to its complement $\ol{l}$ of length at most $k$, conclude that $x$ is a backbone
and set $\varphi := \varphi|_{\ol{l}}$.
Detection of such a path can be done in polynomial time.
Also, at most $O(\Card{\Var{\varphi}})$ iterated applications of this rule
suffice to reach a fixpoint.
All that remains is to show the correspondence.

The correspondence claim follows from the following property.
Let $l \in \Lit{\varphi}$.
If $\impl{\varphi}$ contains a path $\ol{l} \rightarrow\kleene l$
that uses at most $k$ clauses and that doubly uses $m$ of these clauses,
then there exist literals $l_1,\dotsc,l_{m+1} \in \Lit{\varphi}$
such that (i) $l_{m+1} = l$ and
(ii) for each $1 \leq i \leq m+1$ the graph $\impl{\varphi|_{\SBs l_1,\dotsc,l_{i-1} \SEs}}$
contains a path $\ol{l_i} \rightarrow\kleene l_i$ that uses at most $k$ clauses
and does not doubly use any clause.
We prove this claim by induction on $m$.
The case for $m = 0$ is trivial.
Consider the case for $m \geq 1$.
Since the path $\ol{l} \rightarrow\kleene l$ doubly uses
some clause, we know that
$\ol{l} \rightarrow\kleene a \rightarrow \ol{b} \rightarrow\kleene b \rightarrow \ol{a} \rightarrow\kleene l$,
for some $a,b \in \Lit{\varphi}$.
We can assume without loss of generality that the path $\ol{b} \rightarrow^{l} b$
does not doubly use any clause.
If this is not the case, the path $\ol{b} \rightarrow^{l} b$
contains a subpath $\ol{c} \rightarrow\kleene c$
that does not doubly uses any clauses,
and we could select $c$ instead of $b$.
Also, we know that $l \leq k$.
It is easy to see that $\impl{\varphi|_{b}}$
contains the path $\ol{l} \rightarrow\kleene a \rightarrow \ol{a} \rightarrow\kleene l$,
which uses at most $k$ clauses and doubly uses $m-1$ of these clauses.
By the induction hypothesis, we obtain that there exist $l'_1,\dotsc,l'_m$
such that $l'_m = l$ and for each $1 \leq i \leq m$
the graph $\impl{\varphi|_{\SBs l'_1,\dotsc,l'_{i-1} \SEs}}$ contains a path $\ol{l'_i} \rightarrow\kleene l'_i$
that uses at most $k$ clauses and does not doubly use any clause.
Now let $l_1 = b$ and $l_i = l'_{i-1}$ for all $2 \leq i \leq m+1$.
It is straightforward to verify that $l_1,\dotsc,l_{m+1}$ satisfy the required properties.
\end{proof}

\noindent Somewhat related to our mechanism of computing enforced assignments
via iterated $k$-backbones is the mechanism used to define
\emph{unit-refutation complete formulas of level~$k$}
\cite{GwynneKullmann13,Kullmann99}.
This mechanism is based on mappings $r_k$ from \CNF{} formulas to \CNF{} formulas.
For a nonnegative integer $k$, the mapping $r_k$ is defined inductively as follows.
In the case for $k=0$, we let $r_0(\varphi) = \SBs \bot \SEs$ if $\bot \in \varphi$,
and $r_0(\varphi) = \varphi$ otherwise.
In the case for $k>0$, we let $r_k(\varphi) = r_k(\varphi|_{l})$ if there exists
a literal $l \in \Lit{\varphi}$
such that $r_{k-1}(\varphi|_{\overline{l}}) = \SBs \bot \SEs$, and $r_k(\varphi) = \varphi$ otherwise.
In particular, the mapping $r_1$ computes the result of applying unit propagation.
Note that the result of $r_k(\varphi)$ is the application of a number of forced
assignments to $\varphi$, i.e., $r_k(\varphi) = \varphi|_L$ for some $L \subseteq \Lit{\varphi}$
such that for all $l \in L$ we have $\varphi \models l$.
We let $L^{\mtext{UC}}_k(\varphi)$ denote the set of forced literals that are computed by $r_k$,
i.e., $L^{\mtext{UC}}_k(\varphi) = L \subseteq \Lit{\varphi}$ such that $r_k(\varphi) = \varphi|_L$.
Similarly, we let $L^{\mtext{ILB}}_k(\varphi)$ denote the set of forced literals that are
found by computing iterative $k$-backbones.

The following observations relate the two mechanisms.
Let $\varphi$ be an arbitrary \CNF{} formula.
We have that $L^{\mtext{UC}}_1(\varphi) = L^{\mtext{ILB}}_1(\varphi)$.
In fact, this set contains exactly those enforced literals that can be found
by unit propagation.
Also, for any $k \geq 2$ we have that $L^{\mtext{ILB}}_k(\varphi) \subsetneq L^{\mtext{UC}}_k(\varphi)$.
The inclusion follows from the fact that each minimal subset~$\varphi'$ of size at most $k$
that enforces a literal $l$ has at most $k$ literals
(which is a direct result of Tarsi's Lemma).
Whenever $l$ is identified as an enforced literal in iterative $k$-backbone computation,
it can then also be computed by $r_k$ by first guessing $\overline{l}$,
and subsequently obtaining a contradiction for each instantiation of the other
variables in $\Var{\varphi'}$.
In order to see that the inclusion is strict,
consider the family of formulas $(\varphi_n)_{n \in \mathbb{N}}$,
where $\varphi_n = \SB \SBs \neg x_i, x_{i+1} \SEs \SM 1 \leq i < n \SE \cup
\SBs \neg x_n, \neg x_1 \SEs$.
For each $\varphi_n$, we know that $\varphi_n \models \neg x_1$.
Furthermore, we have that $\neg x_1 \in L^{\mtext{UC}}_2(\varphi_n)$,
but $x_1$ is not an iterative $k$-backbone of $\varphi_n$ for any $k < n$.

\section{Experimental Results}
In order to illustrate the relevance of the concept of local backbones and
iterative local backbones,
we provide some empirical evidence
of the distribution of (iterative) local backbones
in instances from different domains.
We considered both randomly generated instances
(3\CNF{} instances with various variable-clause ratios
around the phase transition)
and instances originating from planning~\cite{HoosStuetzle00,KautzSelman96}, circuit fault analysis~\cite{Prelotani96},
inductive inference~\cite{Prelotani96}, and bounded model checking~\cite{Strichman00}.
We considered only satisfiable instances.
For practical reasons,
we used a method that gives us a lower bound on the number of $k$\hy backbone variables.
By reducing the separate \kbackbone{} problems to \smallunsatsubset{},
we can use algorithms computing subset-minimal unsatisfiable subsets
to approximate the number of iterative local backbones
(we used MUSer2~\cite{BelovMarquesSilva12}).
In order to get the exact number,
we would have to compute cardinality-minimal unsatisfiable subsets,
which is difficult in practice.

%

The experimental results are shown in Figure \ref{fig:empirical}.
For each of the instances, we give the percentage of backbones that
are of order $k$ (dashed lines)
and the percentage of backbones
that are of iterative order $k$ (solid lines),
as well as the total number of backbones
and the total number of clauses.
There are instances with several backbones,
most of which have relatively small order.
This is the case for the instances from the domains of
planning (\emph{logistics}), circuit fault analysis (\emph{ssa7552})
and bounded model checking (\emph{bmc-ibm}).
It is worth noting that already more than 75 percent of the backbones
in all the considered \emph{bmc-ibm} instances are of iterative order 2.
We also found instances that have no backbones
of small order or of small iterative order.
This is the case for the instances from the domain
of inductive inference (\emph{ii32})
and the randomly generated instances.
Some of these instances do have backbones,
while others have no backbones at all.
\longversion{

}
It would be interesting to confirm these findings
by a more rigorous experimental investigation.

\begin{figure}[h]
\centering
\begin{subfigure}[b]{0.40\textwidth}
\begin{tikzpicture}[scale=0.85]
  \begin{axis}[
        xlabel={(iterative) order $k$},
        ylabel={\parbox{5cm}{\centering percentage of backbones that\\are of (iterative) order at most $k$}},
        ymin=0, ymax=100,
        width=7cm,
        legend style={at={(0.95,0.40)}}]
    \addplot[color=black]
      plot coordinates {
        (5,27.4)
        (10,72.0)
        (20,88.7)
        (30,97.2)
        (40,97.9)
        (50,98.6)
        (80,98.6)
        (100,98.6)
      };

    \addplot[color=black]
      plot coordinates {
        (5,29.6)
        (10,72.8)
        (20,83.9)
        (30,97.5)
        (40,100.0)
        (50,100.0)
        (80,100.0)
        (100,100.0)
      };

    \addplot[color=black]
      plot coordinates {
        (5,26.2)
        (10,70.1)
        (20,81.5)
        (30,96.7)
        (40,99.6)
        (50,100.0)
        (80,100.0)
        (100,100.0)
      };
    
    \addplot[color=black]
      plot coordinates {
        (5,0.0)
        (10,71.1)
        (20,71.1)
        (30,73.0)
        (40,73.3)
        (50,73.3)
        (80,74.2)
        (100,76.1)
      };
      
      
      \addplot[color=black,dashed]
      plot coordinates {
        (5,24.2)
        (10,46.9)
        (20,77.8)
        (30,88.5)
        (40,94.7)
        (50,97.7)
        (80,98.6)
        (100,98.6)
      };

    \addplot[color=black,dashed]
      plot coordinates {
        (5,22.2)
        (10,44.4)
        (20,77.7)
        (30,90.1)
        (40,97.5)
        (50,97.5)
        (80,100.0)
        (100,100.0)
      };

    \addplot[color=black,dashed]
      plot coordinates {
        (5,21.9)
        (10,44.5)
        (20,76.3)
        (30,88.5)
        (40,96.0)
        (50,96.0)
        (80,100.0)
        (100,100.0)
      };
    
    \addplot[color=black,dashed]
      plot coordinates {
        (5,15.2)
        (10,31.4)
        (20,60.6)
        (30,67.8)
        (40,72.3)
        (50,73.0)
        (80,76.8)
        (100,78.9)
      };
      
  \end{axis}
  
  \node[xshift=110pt, yshift=15pt] (a) {\scriptsize logistics};
  
\end{tikzpicture}
\end{subfigure}
\hspace{40pt}
\begin{subfigure}[b]{0.45\textwidth}
\begin{tikzpicture}[scale=0.85]
  \begin{axis}[
        xlabel={(iterative) order $k$},
        ymin=0, ymax=100,
        width=7cm,
        legend style={at={(0.95,0.40)}}]

    \addplot[color=black]
      plot coordinates {
        (5,19.2)
        (10,31.2)
        (20,48.0)
        (30,53.8)
        (40,92.3)
        (50,92.3)
        (80,100.0)
        (100,100.0)
      };

    \addplot[color=black]
      plot coordinates {
        (5,71.3)
        (10,87.9)
        (20,96.5)
        (30,100.0)
        (40,100.0)
        (50,100.0)
        (80,100.0)
        (100,100.0)
      };

    \addplot[color=black]
      plot coordinates {
        (5,69.5)
        (10,89.3)
        (20,92.5)
        (30,97.3)
        (40,97.3)
        (50,97.3)
        (80,97.3)
        (100,97.3)
      };
    
    \addplot[color=black]
      plot coordinates {
        (5,13.7)
        (10,45.6)
        (20,92.3)
        (30,100.0)
        (40,100.0)
        (50,100.0)
        (80,100.0)
        (100,100.0)
      };


    \addplot[color=black,dashed]
      plot coordinates {
        (5,7.6)
        (10,14.4)
        (20,44.2)
        (30,46.6)
        (40,47.1)
        (50,54.8)
        (80,69.7)
        (100,85.0)
      };

    \addplot[color=black,dashed]
      plot coordinates {
        (5,19.7)
        (10,62.7)
        (20,80.2)
        (30,85.6)
        (40,90.3)
        (50,92.6)
        (80,96.5)
        (100,96.5)
      };

    \addplot[color=black,dashed]
      plot coordinates {
        (5,22.9)
        (10,65.2)
        (20,79.1)
        (30,85.5)
        (40,91.4)
        (50,91.4)
        (80,95.1)
        (100,95.1)
      };
    
    \addplot[color=black,dashed]
      plot coordinates {
        (5,12.6)
        (10,16.2)
        (20,68.5)
        (30,85.2)
        (40,89.3)
        (50,91.3)
        (80,97.9)
        (100,97.9)
      };
      
  \end{axis}
  
  \node[xshift=110pt, yshift=15pt] (a) {\scriptsize ssa7552};
    
\end{tikzpicture}
\end{subfigure}

\begin{subfigure}[b]{0.40\textwidth}
\begin{tikzpicture}[scale=0.85]
  \begin{axis}[
        xlabel={(iterative) order $k$},
        ylabel={\parbox{5cm}{\centering percentage of backbones that\\are of (iterative) order at most $k$}},
        ymin=0, ymax=100,
        width=7cm,
        legend style={at={(0.95,0.40)}}]
    \addplot[color=black]
      plot coordinates {
        (5,85.5)
        (10,85.5)
        (20,87.6)
        (30,88.1)
        (40,88.9)
        (50,89.2)
        (80,100.0)
        (100,100.0)
      };

    \addplot[color=black]
      plot coordinates {
        (5,76.8)
        (10,78.3)
        (20,81.7)
        (30,81.8)
        (40,82.3)
        (50,85.0)
        (80,85.3)
        (100,85.7)
      };

    \addplot[color=black]
      plot coordinates {
        (5,90.3)
        (10,95.7)
        (20,96.6)
        (30,97.9)
        (40,98.1)
        (50,99.2)
        (80,99.8)
        (100,99.3)
      };
      
      
      \addplot[color=black,dashed]
      plot coordinates {
        (5,29.7)
        (10,42.9)
        (20,48.0)
        (30,52.6)
        (40,55.5)
        (50,60.1)
        (80,64.2)
        (100,66.0)
      };

    \addplot[color=black,dashed]
      plot coordinates {
        (5,17.9)
        (10,30.2)
        (20,50.2)
        (30,55.0)
        (40,57.1)
        (50,59.9)
        (80,63.6)
        (100,65.6)
      };

    \addplot[color=black,dashed]
      plot coordinates {
        (5,23.1)
        (10,41.3)
        (20,55.6)
        (30,59.4)
        (40,63.8)
        (50,68.5)
        (80,70.6)
        (100,73.3)
      };
      
  \end{axis}
  
    \node[xshift=110pt, yshift=15pt] (a) {\scriptsize bmc-ibm};

\end{tikzpicture}
\end{subfigure}
\hspace{40pt}
\begin{subfigure}[b]{0.45\textwidth}
\begin{tikzpicture}[scale=0.85]
  \begin{axis}[
        xlabel={(iterative) order $k$},
        ymin=0, ymax=100,
        width=7cm,
        legend style={at={(0.95,0.40)}}]

    \addplot[color=black,mark=x]
      plot coordinates {
        (5,0)
        (10,0)
        (20,0)
        (30,0)
        (40,0)
        (50,0)
        (80,0)
        (100,0)
      };
    
    \addplot[color=black,mark=x]
      plot coordinates {
        (5,0)
        (10,0)
        (20,0)
        (30,0)
        (40,0)
        (50,0)
        (80,0)
        (100,0)
      };

  \end{axis}
  
\node[xshift=110pt, yshift=20pt] (a) {\parbox{18pt}{\scriptsize ii32\\random}};
  
\end{tikzpicture}
\end{subfigure}
\caption{Percentage of backbones that are of order at most $k$ (dashed)
and of iterative order at most $k$ (solid),
for \SAT{} instances from
planning (\emph{logistics.[a--d]}, 828--4713 variables, 6718--21991 clauses, 437--838 backbones),
circuit fault analysis (\emph{ssa7552-[038,158--160]}, 1363--1501 variables, 3032--3575 clauses, 405--838 backbones),
bounded model checking (\emph{bmc-ibm-[2,5,7]}, 2810--9396 variables, 11683--41207 clauses, 405--557 backbones),
inductive inference (\emph{ii32[b--e][1--3]}, 222--824 variables, 1186--20862 clauses, 0--208 backbones)
and random 3\SAT{} instances (\emph{random}, 200 variables, 820--900 clauses, 1--131 backbones).
}
\label{fig:empirical}
\end{figure}

\section{Conclusions}
We have drawn a detailed complexity map of
the problem of finding local backbones
and iterative local backbones,
in general and for formulas
from restricted classes.
Additionally, we have provided some first empirical results
on the distribution of (iterative) local backbones
in some benchmark \SAT{} instances.
We found that in structured instances from different domains
backbones are of quite low (iterative) order.
This suggests that the notions of local backbones
and iterative local backbones can be used to identify
structure in \SAT{} instances.

Some of our findings are somewhat surprising.
(1) Finding local backbones in Horn formulas
is fixed-parameter intractable,
whereas backbones for this class of formulas
can be found in polynomial time.
(2) In certain cases finding iterative local backbones
is computationally easier than finding (non-iterative) local backbones.
(3) Local backbones and iterative local backbones seem to
be a better indicator of structure than backbones.
Random instances do have backbones, but these
are of high order and iterative order.

Backbones and local backbones are implied unit clauses.
It might be interesting to extend our investigation to implied
clauses of larger fixed size, binary clauses in particular.

\vfill
\pagebreak\clearpage
\DeclareRobustCommand{\DE}[3]{#3}
\bibliographystyle{abbrv}

\end{document}